\newcommand{\gb}{\textsc{Graph Burning}}
\newcommand{\GC}{\textbf{\scriptsize{GET-CENTER($G,b$) }}}
\newcommand{\GCT}{\textbf{\scriptsize{GET-CENTERS-FOR-MULTI-ROOTED-TREE($T,b$)}}}
\newcommand{\GCST}{\textbf{\scriptsize{GET-CENTERS-FOR-SINGLE-ROOTED-TREE($T,b$)}}}
\newcommand{\BGC}{\textbf{\scriptsize{BURN-GUESS-CACTUS($G,b$) }}}
\newcommand{\AR}{\textbf{\scriptsize{ARTICULATION-POINT($G,f,r$) }}}
\newcommand{\RA}{\textbf{\scriptsize{RANDOM-ARTICULATION($G$) }}}
\newcommand{\GR}{\textbf{\scriptsize{$G(V,E)$ }}}
\newcommand{\TG}{\textbf{\scriptsize{$T(V,E)$ }}}
\newcommand{\OD}{\textbf{\scriptsize{$D^+$ }}}
\newcommand{\ID}{\textbf{\scriptsize{$D^-$ }}}
\newcommand{\BC}{\textbf{\scriptsize{$b$-CUTTING}}}
\newcommand{\BADG}{\textbf{\scriptsize{BAD-GUESS}}}
\newcommand{\merge}{\textbf{\scriptsize{MERGE-AND-BURN}}}
\tikzset{
	process1/.style={rectangle, minimum width=8cm, minimum height=1cm, text width=10cm,text centered, draw=black},
	process2/.style={rectangle, minimum width=4cm, minimum height=1cm, text width=5cm,text centered, draw=black},
	decision/.style={diamond, text centered, draw=black, aspect=2, inner xsep=-2mm},
	stop/.style={rectangle, rounded corners, minimum width=2cm, minimum height=1cm,text centered, draw=black},
	decision1/.style={diamond, text centered, draw=black, aspect=2, inner xsep=-2mm},
	stop/.style={rectangle, rounded corners, minimum width=2cm, minimum height=1cm,text centered, draw=black},
	arr/.style={thick,-stealth}
}
\newcounter{cases}
\newcounter{subcases}[cases]
\DeclareMathOperator*{\argmax}{argmax}
\title{Approximation Algorithms for the Graph Burning on Cactus and  Directed Trees}
\author{Rahul Kumar Gautam \and Anjeneya Swami Kare \and S. Durga Bhavani}
\institute{School of Computer and Information Sciences, \\University of Hyderabad,\\ Hyderabad, India\\ \mails}
\begin{document}
\tikzstyle{vertex}=[circle,fill=black!25,minimum size=12pt,inner sep=0pt]
\tikzstyle{selected vertex} = [vertex, fill=red!60]
\tikzstyle{edge} = [draw,thick,-]
\tikzstyle{weight} = [font=\small]
\tikzstyle{selected edge} = [draw,line width=3pt,-,red!50]
\tikzstyle{ignored edge} = [draw,line width=3pt,-,black!20]
\maketitle

\begin{abstract}
	Given a graph $G=(V, E)$, the problem of \gb{} is to find a sequence of nodes from $V$, called a burning sequence, to burn the whole graph. This is a discrete-step process, and at each step, an unburned vertex is selected as an agent to spread fire to its neighbors by marking it as a burnt node.  A burnt node spreads the fire to its neighbors at the next consecutive step.    The goal is to find the burning sequence of minimum length. The \gb{} problem is NP-Hard for general graphs and even for binary trees. A few approximation results are known, including a $ 3$-approximation algorithm for general graphs and a $ 2$-approximation algorithm for trees.
	The \gb{} on directed graphs is more challenging than on undirected graphs. In this paper,  we propose  1) A $2.75$-approximation algorithm for a cactus graph (undirected),  2) A $3$-approximation algorithm for multi-rooted directed trees (polytree) and 3) A $1.905$-approximation algorithm for single-rooted directed tree (arborescence). We implement all the three approximation algorithms and the results are shown for randomly generated cactus graphs and directed trees.
\end{abstract}


\section{Introduction}\label{sec:intro}
In a public information campaign, choosing people who can spread authentic information within a village network in the least possible time is important. This problem is modeled as \gb{} problem,  which is defined as  a discrete step sequential process. One node is selected at each time-step $t$ as a source of fire which burns all of its neighbors at the next time step $t+1$. Another node is picked as the source of fire at time $t+1$, and the process of burning continues till all the nodes of the graph are burned.  The burned nodes can burn their neighbors at the next step. The information can be passed from informed nodes in the previous time step. The sequence of nodes selected as fire sources is a burning sequence. An informed node remains in an informed or burned state throughout the process. The goal is to compute the burning sequence of minimum length. The length of the optimal burning sequence is called the  \textit{burning number} of the graph, denoted by $b(G)$. Note that $b(G)$ is a positive integer. The \gb{} problem is similar to the $k$-centre problem~\cite{Garcia2018kcenter}, in which all
the sources of fire are selected in the first step itself. The applications of \gb{} problem can be seen in political campaigns~\cite{2020heuristics} and satellite communication~\cite{Simon2019heuristic}.


Let $G(V, E)$ be a finite, connected, unweighted, and undirected graph having a set of vertices $V$  and edges $E$. Let $S=\{v_0,v_1,v_2,\cdots v_{b-1}\}$ be a burning sequence. At time $t_0$, the vertex $v_0$ is burnt. At time $t_1$, the vertex $v_1$ is burnt, and the neighbors of vertex $v_0$ are also burnt. Similarly, at time $t_i$, the vertex $v_i$ is burnt, and all the vertices which caught fire at the time $t_{i-1}$ will burn their unburned neighbors. The process continues until all the nodes of the graph are burnt. The burning process for the directed graphs is very similar, except that a node burns its out-neighbors.

In this paper, we propose the following approximation algorithms:
\begin{enumerate}
	\item A $2.75$-approximation algorithm for the cactus graph.
	\item A $3$-approximation algorithm for the multi-rooted directed trees. This leads to a $2$-approximation algorithm for single-rooted directed trees.
	\item A $1.905$-approximation algorithm for the single-rooted directed trees.
\end{enumerate}



\section{Related Work}

Bonato et al.~\cite{Bonato2014Contagion} proposed the \gb{} problem. They studied characteristics and proposed bounds for the \gb{} problem. Bonato et al.~\cite{Bonato2016burn} and Bessy et al.~\cite{Bessy2017burning} proved that the \gb{} problem is NP-Complete for general undirected graphs, binary trees, spider graphs, and path forests. Bessy et al.~\cite{Bessy2018bounds} proved bounds for general graphs and trees.
For a connected graph $G$ of order $n$, Bonato et al. ~\cite{Bonato2016burn}  gave upper bound $b(G) \le 2\sqrt{n}-1$ and conjectured an upper bound of $\lceil \sqrt{n} \rceil$. Bessy et al.~\cite{Bessy2018bounds} improved the bounds for connected graphs and trees.  For connected graph $G$ with $n$ vertices,   for  $0<\epsilon <1$, $b(G)\le \sqrt{\frac{32}{19}. \frac{n}{1-\epsilon}} +  \sqrt{\frac{27}{19\epsilon}}$ and $b(G) \le \sqrt{\frac{12n}{7}}+3$. For a tree $T$ of order $n$, if $n_2$ is the number of vertices of degree $2$  and $n_{\ge 3}$ is the number of vertices with the degree at least three, then $b(T) \le \lceil \sqrt{(n+n_2)+ \frac{1}{4}+ \frac{1}{2}} \rceil$ and $b(T) \le \lceil \sqrt{n} \rceil + n_{\ge 3}$. Land and Lu~\cite{Land2016upper} improved the upper for connected graphs to $\frac{\sqrt{6}}{2}\sqrt{n}$.

The \gb{} problem is fixed-parametrized tractable (FPT) for parameters: distance to cluster and neighborhood diversity~\cite{Kare2019parameterized}. Kare et al.~\cite{Kare2019parameterized} proved that the problem is polynomial-time solvable for   cographs and split graphs. Later, Yasuaki et al.~\cite{2021parameterized} proved that the \gb{} problem is $W[2]$-Complete for  the parameter $b(G)$ and is FPT for the combined parameter of clique-width and the maximum diameter among the connected components.
The \gb{} problem was studied for different classes of graphs such as spider graphs~\cite{Bonato2019spider}, generalized Peterson graphs~\cite{Kai2017peterson}, dense graphs~\cite{Kamali2020dense} and theta graphs~\cite{Lui2019theta}. Further, several heuristics~\cite{Simon2019heuristic,Farokh2020NewHF,2021heuristics} have been proposed based on centrality measures and the graph structure for the \gb{} problem.
A generalization of \gb{} problem called $k$-burning problem is studied in~\cite{2021kburning}.

Bonato et al.~\cite{Bonato2019approx} proposed a $3$-approximation algorithm for general graphs, a $2$-approximation algorithm for trees, and a $1.5$-approximation algorithm for disjoint paths. Bonato et al.~\cite{Bonato2019approx} also showed that the \gb{} problem could be solved in polynomial time on path forests, where the number of paths is constant. Further, they proposed approximation schemes for the path forests where the number of paths is not constant. Recently, Diaz et al.~\cite{2021Approximation} proposed an approximation algorithm with approximation factor $3- 2/b(G)$ for general graphs.

Janssen et al. \cite{2020directedGraph} studied the \gb{} problem for directed graphs. They studied the problem for directed acyclic graphs and directed trees.  Janssen et al. \cite{2020directedGraph} showed that \gb{} problem is NP-Complete for directed trees and directed acyclic graphs (DAG). They proved that the \gb{} problem  is FPT for trees with  burning number  as the parameter and W[2]-Hard for DAG with  burning number  as the parameter.



The rest of the paper is organized as follows:  In section \ref{sec:cac},  we discuss the approximation algorithm for the cactus graph. In section \ref{sec:tree}, we present  $3$-approximation algorithm for polytrees (directed multi-rooted tree). Section \ref{sec:dtree} discusses the $1.905$-approximation algorithm for  directed trees (arborescence).

For $b \in Z^+$, the sources of fire have the corresponding burning ranges $[b-1,b-2,b-3, \cdots 0]$. Throughout the paper, for any decimal value $k$, by burning range $bk$, we mean, $\lceil b*k \rceil$.

\section{Approximation algorithm for cactus}
\label{sec:cac}
Bonato et al.~\cite{Bonato2019approx} proposed a $3$-approximation algorithm for the general undirected graphs and a $2$-approximation algorithm for trees.  A cactus graph is an undirected graph in which any pair of simple cycles share at most one vertex.  A cactus graph is structurally close to a tree. 

The $3$-approximation and $(3- 2/b(G))$-approximation algorithms for general graphs apply to the cactus. There is no known less than $3$-approximation algorithm for graphs that contain cycles. This paper proposes a $ 2.75$-approximation algorithm for the cactus graph.
The algorithm takes a cactus graph $G$ and a positive integer $b$ as input and returns a burning sequence of length $2.75b$ or returns \BADG{} $(-1)$ to indicate that the graph can not be burned in at most $b$ rounds.

If the cactus graph has no articulation point, then the graph is a single cycle, and hence the problem is polynomial time-solvable~\cite{Bonato2014Contagion}.  In the rest of the paper, we assume that the cactus graph has at least one articulation point.
The algorithm is as follows: For burning number $2.75b$, we have the sequence $BR$, which contains $b$ number of burning ranges as shown below.  $$BR = [\underbrace{1.75b, 1.75b+1, 1.75b+2, \cdots, 2b-1}_{BR_1},\underbrace{2b, 2b+1, \cdots 2.75b-1}_{BR_2}]$$
For the first part $|BR_1| = 0.25b$ and contains burning radii of range at least  $1.75b$.
And for the second part $|BR_2| = 0.75b$ and contains  burning radii of range at least $2b-2$.

\fbox{
\begin{minipage}{32em}
	\textbf{Input} : The graph G=(V,E) and $b \in Z^+$. \\
	\textbf{Output} : Either the algorithm returns a burning sequence or \BADG{} (-1).
\end{minipage}

}
\begin{enumerate}
\item Initially, all the vertices of graph $G$ are unmarked.
\item Initialize, variables $b_1 = |BR_1| = \lceil 0.25b \rceil$, $b_2 = |BR_2| =  \lceil 0.75b \rceil$ and the corresponding  burning sequences $BS_1 = BS_2 = [\;]$ and a random articulation point $r$.
\item If all vertices of the graph are marked,
\begin{itemize}
	\item Return $(BS_1,BS_2)$.
\end{itemize}
\item Select the farthest unmarked node $f$ from $r$.
\begin{enumerate}
	\item \textbf{If} there exists a path $P = \{ v_1 = f, v_2, \cdots, v_k\}$ from the node $f$ towards the root $r$ and $v_k$ is an articulation point such that $0.25b \le d(f, v_k) \le 1.75b$ and $b_1 \ge 1$, then
	\begin{itemize}
		\item Add source of the fire $v_k$ to $BS_1$.
		\item Mark all vertices  $ u$ in $\{ v \in V(G): d(v_k,v)\le 1.75b \}$
		\item Decrease $b_1$ by $1$.
	\end{itemize}
	\item \textbf{Else-if} $b_2 \ge 1$,
	\begin{itemize}
		\item  Add source of the fire $f$ to $BS_2$.
		\item Mark all vertices  $ u$ in  $\{ v \in V(G): d(f,v)\le 2b-2 \}$.
		\item Decrease $b_2$ by $1$.
	\end{itemize}
	\item \textbf{Else}
	\begin{itemize}
		\item  return \BADG{} (-1).
	\end{itemize}
\end{enumerate}
\item Repeat steps 3-4 until all the vertices of graph $G$ get marked and return the burning sequence or return -1.
\end{enumerate}
Initially, all the vertices of graph $G$ are unmarked. We take an arbitrary articulation point $r \in V$ as a root.
Let the unmarked farthest node from $r$ be $f$, and  $\text{N}_x[f]$ be the union of all \textbf{unmarked} vertices around $f$ up to burning range $x \in Z^+$ in the burning process. Obviously for any $u,v \in V(G)$, if $u \notin \text{N}_{2b-2}[v]$, to burn both the vertices ($u$ and $v$) we need at least $2$ rounds if $b(G)\le b$. The question is: can we mark all $w$ in $\text{N}_{2b-2}[v]$ in less than the $2b-2$ burning range(radius)? Yes, it is possible due to the presence of articulation points in the cactus graph. The importance of articulation points in the graph is shown in \figurename~\ref{fig:1.75b}. If we mark the vertices up to distance $10$ around $f$, then the vertices $5$ and $8$ are not covered. On the other hand, if we mark the vertices around the vertex $v_k$ up to distance $7$, all the vertices are covered, means that $\text{N}_{10}[f] \subseteq \text{N}_{7}[v_k]$. With this intuition, we state the following lemma.

\begin{lemma}
\label{lemma1}
$f$ is the farthest vertex from the root $r$. If there exists a path $P = \{ v_1 = f, v_2, \cdots, v_k\}$ from the node $f$ towards the root $r$ such that $v_k$ is an articulation point and $0.25b \le d(f, v_k) \le 1.75b$, then $\text{N}_{2b-2}[f] \subseteq \text{N}_{1.75b}[v_k]$.
\end{lemma}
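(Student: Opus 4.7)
The plan is to take an arbitrary $u \in \text{N}_{2b-2}[f]$ and show $d(u, v_k) \le 1.75b$, by splitting into cases based on which side of the articulation point $v_k$ the vertex $u$ lies on. Deleting $v_k$ partitions the remaining vertices into components; let $C_f$ be the component that contains $f$. The key observation, valid in any graph (not just a cactus), is that if $v_k$ is a cut vertex separating two vertices $x$ and $y$, then every $x$--$y$ path must pass through $v_k$, hence $d(x, y) = d(x, v_k) + d(v_k, y)$.

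In the first case, suppose $u = v_k$ or $u$ lies outside $C_f$. Then $v_k$ separates $u$ from $f$, so $d(u, f) = d(u, v_k) + d(v_k, f)$. Combining $d(u, f) \le 2b - 2$ with the hypothesis $d(v_k, f) \ge 0.25b$ yields $d(u, v_k) \le 1.75b - 2 < 1.75b$. In the second case, suppose $u \in C_f$. Then $v_k$ separates both $u$ and $f$ from $r$, giving $d(u, r) = d(u, v_k) + d(v_k, r)$ and $d(f, r) = d(f, v_k) + d(v_k, r)$. Since $u$ is unmarked (by definition of $\text{N}_{2b-2}[f]$) and $f$ was chosen as the farthest unmarked vertex from $r$, we have $d(u, r) \le d(f, r)$, whence $d(u, v_k) \le d(f, v_k) \le 1.75b$.

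In either case $u \in \text{N}_{1.75b}[v_k]$, proving the containment. The only place that really uses structural information is Case 2, where both the articulation property of $v_k$ and the maximality of $f$ are invoked simultaneously; this is the step I expect to be the most delicate to state cleanly. The other mild point to verify is that the unmarked status carries over, but this is immediate because $\text{N}_x[\cdot]$ is defined on the same snapshot of unmarked vertices at the current iteration, so no vertex becomes newly marked between evaluating the two neighborhoods.
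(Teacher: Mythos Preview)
Your argument is correct and rests on the same two ingredients the paper uses: the separator property of the articulation point $v_k$ (forcing distance additivity through $v_k$) and the maximality of $f$ among unmarked vertices. The paper's proof splits instead by distance---first handling $u\in N_{1.75b}[f]$ via the ``$f$ farthest'' property, then observing that the leftover vertices in $N_{2b-2}[f]\setminus N_{1.75b}[f]$ land within $0.25b$ of $v_k$---but once you unpack that split it is exactly your two cases rearranged, so the approaches coincide; your component-based decomposition is simply the cleaner way to phrase it, and your explicit check that $r\notin C_f$ (so that additivity toward $r$ is available in Case~2) fills in a step the paper leaves implicit.
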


\begin{proof}
If $0.25b \le d(f, v_k) \le 1.75b$,  $\text{N}_{1.75b}[f] \subseteq \text{N}_{1.75b}[v_k]$ since $f$ is the farthest unmarked vertex and  $f \in \text{N}_{1.75b}[v_k]$.  
Remaining $ u \in (\text{N}_{2b-2}[f] \backslash \text{N}_{1.75b}[f])$ are within $0.25b$ distance from $v_k$. Therefore, $\text{N}_{2b-2}[f] \subseteq \text{N}_{1.75b}[v_k]$.

\end{proof}

\begin{figure}
\centering
\def\svgwidth{\columnwidth}
\scalebox{0.10}{\includegraphics{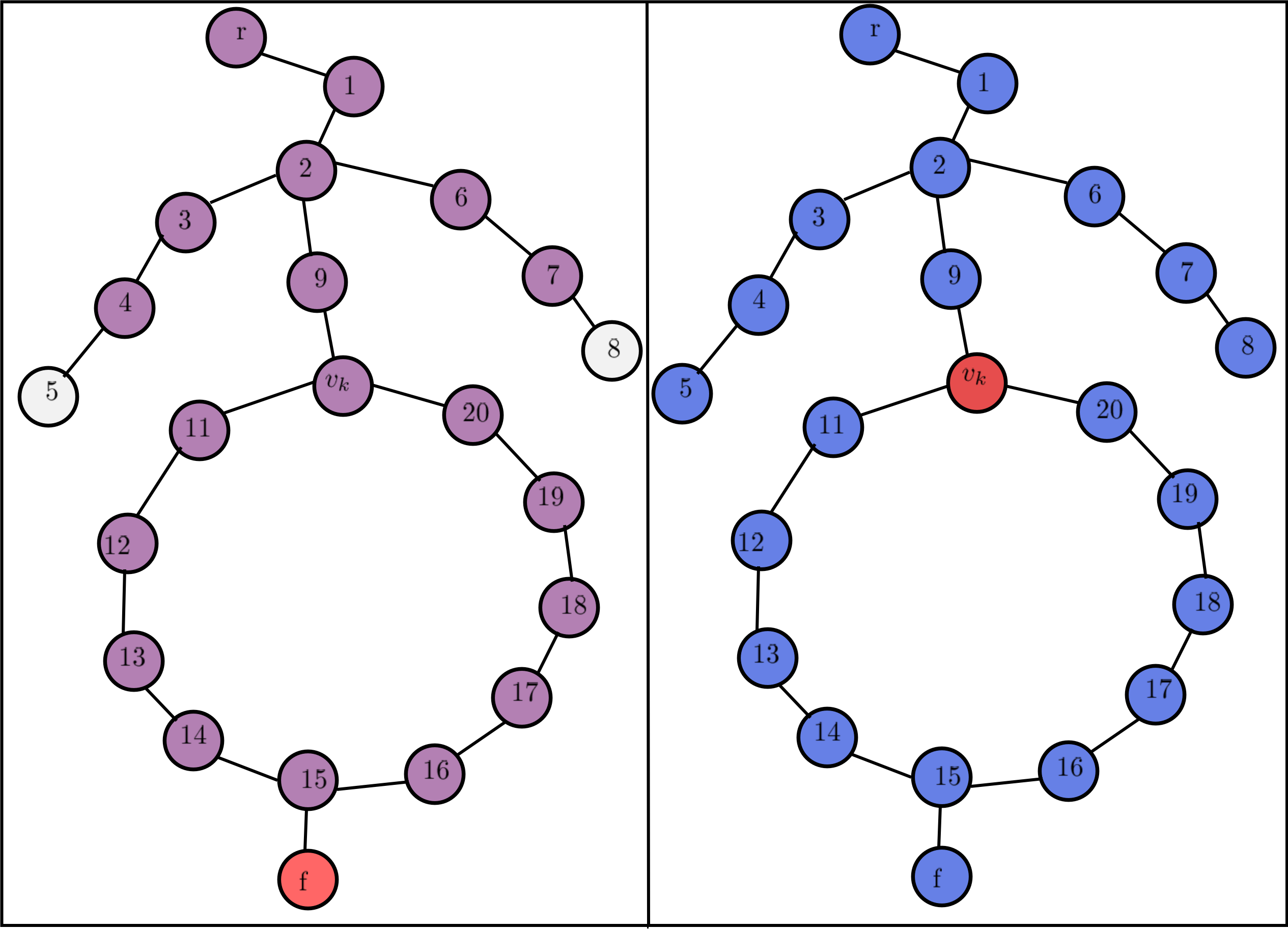}}
\caption{$f$ is a farthest unmarked node from the root $r$ and $v_k$ is an articulation point in the path joining $f$ and $r$. A small radius centered at $v_k$ can cover all unmarked vertices as covered by a larger radius centered at $f$. Clearly, $\text{N}_{10}[f] \subseteq \text{N}_{7}[v_k]$.}
\label{fig:1.75b}
\end{figure}

\begin{lemma}
\label{lemma2}
Let $f$ be the farthest unmarked node from the root $r$. If there is no articulation point $v_k$ towards the root such that $0.25b \le d(f, v_k) \le 1.75b$, there exists a cycle of length at least $3b$.
\end{lemma}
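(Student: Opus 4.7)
The plan is to walk along a shortest path from $f$ to $r$, track the articulation points visited, and exploit the fact that every block in a cactus is either an edge or a simple cycle. The hypothesis forbids any articulation point on this path from lying in the distance band $[0.25b, 1.75b]$ measured from $f$, so the sequence of AP-distances from $f$ must contain a ``jump'' across that band, and such a jump is forced to happen inside one block.

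More concretely, I would enumerate the articulation points on the path as $a_0, a_1, \ldots, a_m = r$ in order of increasing distance from $f$. By hypothesis each $d(f, a_i)$ is either strictly less than $0.25b$ or strictly greater than $1.75b$. Assuming some AP falls into the second range (I treat this as the informative case), let $j$ be the smallest index with $d(f, a_j) > 1.75b$. If $j \ge 1$, then $d(f, a_{j-1}) < 0.25b$, and since the segment of a shortest path between two of its vertices is itself a shortest path, $d(a_{j-1}, a_j) = d(f, a_j) - d(f, a_{j-1}) > 1.5b$. Consecutive articulation points on a shortest path lie in a common block, and a block that is a single edge cannot accommodate a distance exceeding $1.5b$; hence the block is a simple cycle, its shorter arc between $a_{j-1}$ and $a_j$ has length greater than $1.5b$, and the total cycle length strictly exceeds $3b$. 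The case $j = 0$ is analogous with $f$ and $a_0$ playing the roles of $a_{j-1}$ and $a_j$, yielding a cycle of length exceeding $3.5b$.

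The main obstacle I anticipate is the degenerate case in which every articulation point on the path (including $r$) is within $0.25b$ of $f$, so no jump exists and the cactus could in principle consist only of small cycles. I would resolve this by observing that it is vacuous in context: $d(f, r) < 0.25b$ forces all unmarked vertices to sit within $0.25b$ of $r$, so the remaining portion of the graph can be finished off trivially without ever needing the lemma. Making this precise---by adding an implicit assumption such as $d(f, r) > 1.75b$ or by appealing to the algorithm's invariants at the moment the lemma is used---is the subtle point in an otherwise direct block-structural argument.
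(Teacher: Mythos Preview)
Your approach is correct and essentially the same as the paper's: both locate the block through which the $f$--to--$r$ path crosses the forbidden distance band $[0.25b,1.75b]$ and conclude that its two arcs each have length at least $1.5b$, giving a cycle of length at least $3b$. The paper's proof is a two-line sketch that simply asserts the existence of the two parallel paths, whereas you supply the block--cut justification and correctly flag the degenerate case $d(f,r)<0.25b$, which the paper silently ignores (and which indeed never arises when the lemma is invoked by the algorithm).
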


\begin{proof}
If there is no articulation point $v_k$ such as $0.25b \le d(f, v_k) \le 1.75b$, then two parallel paths each of length  at least $1.5b$ exist, and both the paths meet at a point towards the root $r$ from $f$. Therefore there exists a cycle of length at least $3b$. { }
\end{proof}

At each iteration, we find $f$, the farthest unmarked node from the root $r$. If  there exists an articulation point $v_k$ satisfying constraints of Lemma~\ref{lemma2}, then we add $v_k$ to the burning sequence $BS_1$ and mark nodes around $v_k$ up to distance $1.75b$; otherwise, we add $f$ to the burning sequence $BS_2$  and mark nodes around $f$ up to distance $2b-2$.

The process is depicted in  Algorithm~\ref{algo:cactus}. The \BGC procedure takes a cactus graph $G$ and a positive integer $b$ as input and returns either a burning sequence or \BADG{} to indicate that the graph can not be burned in $b$ rounds.

\begin{lemma}
\label{lemma3}
If there are $k$ cycles, each of length at least $3b$, then there are $k$ arcs (paths) each of length $b$ such that the distance between any pair of arcs is at least $b$. Here, we say, two arcs $A_1$ and $A_2$ are at a distance at least  $b$ if for every pair of vertices $v_1 \in A_1$ and $v_2 \in A_2$, $d(v_1, v_2) \ge b$.
\end{lemma}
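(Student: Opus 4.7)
The plan is to work in the block-cut tree $T$ of the cactus $G$, whose nodes are the blocks (cycles and bridges of $G$) together with the articulation points of $G$, with a block adjacent to each articulation point it contains. If $G$ itself is a single cycle then $k=1$ and the claim is immediate, so I assume $k\ge 2$, which forces articulation points to exist. I would root $T$ at one such articulation point, so that every cycle $C_i$ acquires a well-defined \emph{parent articulation point} $p_i\in V(C_i)$, namely the neighbour of $C_i$ in $T$ on the $C_i$-to-root path.

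For each cycle $C_i$, I would then take $A_i$ to be the arc antipodal to $p_i$: delete $p_i$ from $C_i$ to obtain a path of length $|C_i|-1\ge 3b-1$, and let $A_i$ be its middle sub-path of exactly $b$ edges. A straightforward count shows that every vertex of $A_i$ lies at distance at least $(|C_i|-b)/2\ge b$ from $p_i$ along the cycle, and since that cyclic distance coincides with the graph distance in $G$ (any $A_i$-to-$p_i$ path in $G$ that leaves $C_i$ must come back through $p_i$ itself), one gets $d(A_i,p_i)\ge b$ for every $i$.

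It remains to check that $d(A_i,A_j)\ge b$ for all $i\ne j$. The cactus property -- two blocks share at most one vertex -- forces any $G$-path from a vertex of $C_i$ to a vertex of $C_j$ to exit $C_i$ through a single articulation point $e_i$ determined by the direction of $C_j$ in $T$, and similarly to enter $C_j$ through some $e_j$. A brief case analysis on the positions of $C_i$ and $C_j$ in the rooted $T$ shows that at least one of the equalities $e_i=p_i$ or $e_j=p_j$ always holds: if $C_i$ and $C_j$ are incomparable in $T$ then both hold, since each cycle is exited in the direction of its parent; if one is an ancestor of the other, the deeper cycle is exited through its own parent. In all cases the corresponding bound $d(A_\bullet,p_\bullet)\ge b$ contributes fully to $d(u,v)$ for any $u\in A_i$, $v\in A_j$, giving the desired separation. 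The main obstacle I anticipate is phrasing this ``exit articulation point'' argument cleanly, as this is the only place where the cactus hypothesis is actually used; once the block-cut tree description is in place, the verification is short.
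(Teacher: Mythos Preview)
Your construction is correct and follows a genuinely different route from the paper. The paper argues by induction on $k$: it fixes an articulation point $r$, takes the vertex $f$ farthest from $r$ (lying on some long cycle $C$), and lets the new arc be the $b$-length arc of $C$ centred at $f$; extremality of $f$ is used to show that no other long cycle comes within distance $b$ of this arc, and the induction hypothesis handles the remaining $k-1$ cycles. You instead give a single global construction via the block--cut tree: root it at an articulation point, take in each long cycle the arc antipodal to its parent cut-vertex $p_i$, and derive the pairwise separation from the observation that any $C_i$--$C_j$ path in the rooted tree exits at least one of the two cycles through its parent. Your approach makes the use of the cactus hypothesis fully explicit (it is exactly what makes the block--cut tree a tree and pins down a unique exit articulation point), and it sidesteps points the inductive argument leaves implicit, such as why the farthest vertex lands on one of the $k$ long cycles and why distances to the previously chosen arcs are preserved under the inductive step. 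One minor slip in your write-up: the parenthetical that an $A_i$--$p_i$ path leaving $C_i$ ``must come back through $p_i$ itself'' should say it comes back through whichever articulation point it left by; the conclusion that cycle distance equals $G$-distance is unaffected.
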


\begin{proof}
We prove the lemma by induction. The statement is trivially valid for $k = 1$ and $k=2$. For $k=2$, we can form arcs around opposite ends of the cycles.

Let us assume that the statement is valid for the number of cycles $k-1$; we now prove that the statement is true for $k$ cycles too. From any articulation point, consider the farthest node $f$, which is part of a cycle $C$. Consider $b$-length arc centered at $f$. Any vertex which is at distance less than $b$  from this arc cannot be shared by another cycle as that will contradict the fact that $f$ is the farthest node. Therefore any vertex of other cycles will be at least $b$ distance from the arc $C$. Using this argument and induction hypothesis, we get $k$ arcs at a pairwise distance at least $b$.

\end{proof}

\begin{lemma}
\label{lemma4}
If there are $0.75b+1$ cycles of length at least $3b$ then $b(G) > b$
\end{lemma}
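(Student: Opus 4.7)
My plan is to proceed by contradiction: assume $b(G)\le b$, so there exists a burning sequence $v_0,\dots,v_{b-1}$ in which $v_i$ has burning radius $r_i=b-1-i$. Invoking Lemma~\ref{lemma3}, the hypothesis supplies $k=0.75b+1$ arcs $A_1,\dots,A_k$ of length $b$ (so with $b+1$ vertices each), pairwise at distance at least $b$, each living in its own cycle $C_j$ of length at least $3b$. I will show that the $b$ available sources simply cannot burn the $k(b+1)$ vertices that these arcs contribute.

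The first and most delicate step will be to argue that every source $v_i$ contributes vertices to at most one arc. Triangle inequality on the pairwise arc distance gives $d(v_i,A_j)+d(v_i,A_{j'})\ge b$, so a source reaching two arcs must satisfy $r_i\ge b/2$. Combined with the cactus structure coming from the construction of Lemma~\ref{lemma3}, the open $b$-neighborhood of each $A_j$ meets no other cycle, so any source within distance $b-1\le r_i$ of $A_j$ lies inside $C_j$ itself. Because two cycles of a cactus share at most one vertex, and that shared articulation point sits at distance exactly $b$ from $A_j$ (hence unreachable with radius at most $b-1$), no single source simultaneously reaches two distinct arcs.

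The remaining step is a clean counting argument. Since each source is tied to at most one arc, and each arc is a path with $b+1$ vertices, $v_i$ can burn at most $\min(2r_i+1,\,b+1)$ vertices of its unique arc. Splitting the sum according to whether $r_i\ge b/2$ or $r_i<b/2$,
\[
\sum_{i=0}^{b-1}\min(2r_i+1,\,b+1)\;=\;\lfloor b/2\rfloor(b+1)+\sum_{r=0}^{\lceil b/2\rceil-1}(2r+1)\;\le\;\tfrac{3}{4}b^{2}+\tfrac{b}{2}.
\]
On the other hand, burning every arc requires covering $k(b+1)=(0.75b+1)(b+1)=\tfrac{3}{4}b^{2}+\tfrac{7}{4}b+1$ arc-vertices, which strictly exceeds $\tfrac{3}{4}b^{2}+b/2$ for every $b\ge 1$. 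This contradiction establishes $b(G)>b$.

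The main obstacle I anticipate is nailing down the "at most one arc per source" claim using only the separation property exported by Lemma~\ref{lemma3}; once that structural fact is secure, the rest is straightforward arithmetic on the sum above.
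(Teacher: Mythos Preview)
Your approach is correct in spirit and reaches the right contradiction, but it takes a more laborious route than the paper's. Both arguments start from Lemma~\ref{lemma3} and both ultimately rest on the same structural fact you isolate as the crux: a source of radius at most $b-1$ can touch at most one of the arcs. The paper uses this implicitly, while you make it explicit. Your justification of that fact is slightly off as written---a vertex within distance $b-1$ of $A_j$ need not lie \emph{in} $C_j$ (it could sit on a bridge or tree piece hanging off $C_j$); what Lemma~\ref{lemma3}'s proof actually gives is that such a vertex belongs to no cycle \emph{other} than $C_j$. That weaker statement, applied symmetrically to $A_j$ and $A_{j'}$, still forces the source off every long cycle and hence off every arc's approach path, which is enough to finish the claim, so the gap is cosmetic rather than fatal.

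Where the two proofs diverge is in the counting. You count \emph{vertices}: each source contributes at most $\min(2r_i+1,\,b+1)$ arc-vertices, and you sum this to roughly $\tfrac34 b^2$ and compare against the $(0.75b+1)(b+1)$ vertices in the arcs. The paper instead counts \emph{rounds}: since an arc of length $b$ that is burned by a single source forces that source to have radius at least $0.5b$, and only $0.5b$ sources have such radius, at most $0.5b$ arcs are singly covered; the remaining $0.25b+1$ arcs each consume at least two sources, yielding $0.5b+2(0.25b+1)>b$ sources in total. The paper's argument is shorter and avoids your summation entirely, but your vertex count is a perfectly valid (and somewhat more robust) alternative that would generalize more readily if the arc lengths or separation were non-uniform.
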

\begin{proof}
By Lemma~\ref{lemma3}, there are $0.75b+1$ arcs each of length $b$ such that the arcs are at a pairwise distance of at least $b$. To completely burn an arc with a single round, we need a burn radius of at least $0.5b$.

For the sake of contradiction, let us assume that the graph can be burned in $b$ rounds. There are $0.5b$ number of burn radii of range at least $0.5b$. Therefore, at most $0.5b$ arcs are completely burned with these radii. The remaining arcs are $0.25b+1$, each requiring at least $2$ rounds to burn the arc completely. So the total number of rounds $0.5b + 2(0.25b+1) > b$, which is a contradiction. Therefore $b(G) > b$.

\end{proof}

\begin{lemma}[From~\cite{Bonato2019approx}]
\label{lemma:r2rm1}
For a positive integer $b$, if there are $b$ vertices at a pairwise distance of at least $2b-1$, then $b(G) \ge b$.		
\end{lemma}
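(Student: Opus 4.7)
The plan is to argue by contradiction, using only the triangle inequality and the definition of the burning number. Suppose for a contradiction that $b(G) \le b-1$. Then there exists a burning sequence $v_0, v_1, \ldots, v_{b'-1}$ of some length $b' \le b-1$ that burns all of $V(G)$, and by the standard convention each source $v_i$ can reach exactly the vertices within distance $b' - 1 - i$ by the time the process terminates. In particular, every source has burning range at most $b' - 1 \le b - 2$.

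Let $U = \{u_1, \ldots, u_b\}$ be the given set of $b$ vertices with pairwise distance at least $2b-1$. Each $u_j$ must be ignited by some source $v_{i_j}$ in the burning sequence, so $d(v_{i_j}, u_j) \le b' - 1 - i_j \le b-2$. Suppose two distinct vertices $u_j, u_{j'} \in U$ were ignited by the same source $v_i$. Then by the triangle inequality,
\[
d(u_j, u_{j'}) \;\le\; d(u_j, v_i) + d(v_i, u_{j'}) \;\le\; 2(b' - 1 - i) \;\le\; 2(b-2) \;=\; 2b - 4,
\]
which is strictly less than $2b-1$, contradicting the hypothesis on $U$.

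Hence the map $u_j \mapsto v_{i_j}$ must be injective, so the burning sequence contains at least $b$ distinct sources. But its length is $b' \le b - 1 < b$, a contradiction. Therefore $b(G) \ge b$. The argument is essentially a single triangle-inequality estimate, so I do not expect a substantive obstacle; the only point to be careful about is the off-by-one in the burning-range convention, which is fixed by the sentence preceding Section~\ref{sec:cac} in the paper.
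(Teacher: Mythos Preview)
Your proof is correct and follows essentially the same line as the argument the paper gives for Corollary~\ref{cor:r2rm1} (the paper itself does not prove Lemma~\ref{lemma:r2rm1}, deferring to~\cite{Bonato2019approx}, but remarks that the corollary's proof is ``very similar''): one shows via the triangle inequality that the $(b-1)$-balls around the distinguished vertices are pairwise disjoint, so each requires its own fire source, forcing the burning sequence to have length at least $b$. The only cosmetic difference is that the paper phrases it in terms of non-intersecting circles rather than an injective map from distinguished vertices to sources.
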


From Lemma~\ref{lemma:r2rm1}, we have the following Corollary. The proof of the Corollary is very similar to the proof of Lemma~\ref{lemma:r2rm1} of Bonato et al. ~\cite{Bonato2019approx}.

\begin{corollary}
\label{cor:r2rm1}
For a positive integer $b$, if there are $b+1$ vertices at a pairwise distance of at least $2b-1$ in $G$, then $b(G) \ge b+1$.
\end{corollary}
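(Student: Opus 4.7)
The plan is to argue by contradiction, following the same pigeonhole pattern that underlies Lemma~\ref{lemma:r2rm1}. First I would suppose, for contradiction, that $b(G) \le b$, so there is a burning sequence $v_0, v_1, \ldots, v_{b-1}$ of length at most $b$. In such a sequence the source $v_i$ picked at round $i$ eventually burns every vertex within graph distance $b-1-i$, so the entire vertex set of $G$ is covered by the $b$ closed balls $\text{N}_{b-1-i}[v_i]$ for $i = 0, 1, \ldots, b-1$. The largest of these balls has radius $b-1$.

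Next I would observe that any two of the $b+1$ distinguished vertices lie at distance at least $2b-1$ in $G$, while the diameter of even the largest burning ball is at most $2(b-1) = 2b-2 < 2b-1$. Consequently, no single burning ball can contain two of the $b+1$ distinguished vertices. By the pigeonhole principle, $b$ balls cannot cover $b+1$ such vertices, so at least one distinguished vertex remains unburned. This contradicts the assumption that the whole graph is burned in $b$ rounds, and hence $b(G) \ge b+1$.

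I do not expect a substantive obstacle in this argument; the corollary is essentially the same counting fact as Lemma~\ref{lemma:r2rm1}, shifted by one. The only point that needs care is the strict inequality between the pairwise distance $2b-1$ and the maximum ball diameter $2b-2$, which is what permits the pigeonhole step with $b+1$ vertices against $b$ balls rather than $b$ vertices against $b-1$ balls. Once that is noted, the proof is a direct adaptation of the argument in~\cite{Bonato2019approx}.
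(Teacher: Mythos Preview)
Your proposal is correct and follows essentially the same pigeonhole-by-contradiction argument as the paper. The only cosmetic difference is that the paper centers radius-$(b-1)$ balls at the $b+1$ distinguished vertices and argues these balls are pairwise disjoint so each must contain a distinct burning source, whereas you center the balls at the burning sources and argue no ball contains two distinguished vertices; these are dual phrasings of the same count.
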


\begin{proof}
Let $S = \{ x_1, x_2, \cdots, x_{b+1} \}$ be the vertices at a pairwise distance of at least $2b-1$. If there is a burning
sequence of length $b$, each node in the burning sequence spreads fire to the nodes at a distance of at most $b-1$. For each
$x_i \in S$, consider a circle of radius $b-1$.  No two circles intersect, as we know the distance between the centers is at least
$2b-1$.  To burn a center vertex,  we should either put fire at the center vertex or at a vertex in its $b-1$ circle.
Any burned vertex in one circle can not spread fire to another circle's center because the distance between the vertex and the other center
is greater than $b-1$. So we should include at least one vertex from each circle in the burning sequence. Therefore, we can not burn the graph
in $b$ rounds and hence $b(G) \ge b+1$.

\end{proof}

\begin{lemma}
\label{lemma6}
If algorithm \BGC returns \BADG{}, then $b(G) \ge b+1$.
\end{lemma}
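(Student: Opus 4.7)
The plan is to split on the value of the counter $b_1$ at the moment \BADG{} is returned. Inspecting step~4 of Algorithm~\BGC{}, \BADG{} fires precisely when $b_2 = 0$ together with either (i) the articulation-point condition of Lemma~\ref{lemma1} failing for the current farthest-unmarked node, or (ii) $b_1 = 0$ already. I would track, across all iterations, the farthest-unmarked node $f_i$ selected and the reason each iteration used case~(a), case~(b), or aborted, and I would use the fact that every successful iteration marks the full set $\text{N}_{2b-2}[f_i]$ (directly in case (b), and via Lemma~\ref{lemma1} in case (a)).

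For the branch $b_1 = 0$ at termination, the algorithm has completed $\lceil 0.25b\rceil + \lceil 0.75b\rceil \ge b$ iterations, and some vertex $u$ is still unmarked. Because each iteration marks all of $\text{N}_{2b-2}[f_i]$, we get $d(f_i, f_j) \ge 2b-1$ for $i \ne j$ and $d(u, f_i) \ge 2b-1$ for every $i$. This exhibits at least $b+1$ vertices at pairwise distance $\ge 2b-1$, and Corollary~\ref{cor:r2rm1} immediately gives $b(G) \ge b+1$.

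For the branch $b_1 > 0$ at termination, $b_1$ is non-increasing so $b_1 \ge 1$ throughout; hence each of the $\lceil 0.75b\rceil$ case-(b) calls, as well as the final aborted iteration, was triggered by the articulation condition failing (rather than by a spent case-(a) budget). Lemma~\ref{lemma2} then supplies a cycle of length $\ge 3b$ per failing iteration, for a total of $\lceil 0.75b\rceil + 1 \ge 0.75b + 1$ such cycles, and Lemma~\ref{lemma4} yields $b(G) > b$.

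The hard part I expect is justifying that the $\lceil 0.75b\rceil + 1$ cycles produced in the second branch are \emph{distinct} simple cycles of the cactus. Each cycle contains its corresponding $f_i$, and the $f_i$'s are pairwise at distance $> 2b-2$, yet two vertices of a simple cycle of length $c$ can lie as far as $\lfloor c/2 \rfloor$ apart, so two $f_i$'s could in principle share a cycle once its length exceeds $4b-4$. I would handle this either by showing that in the degenerate situation a single such long shared cycle already contains enough pairwise-distant arcs to invoke Lemma~\ref{lemma3} directly on its own, or by blending the $f_i$-separation argument of the first branch with the cycle-count argument to obtain the $b+1$ bound jointly; small-$b$ boundary cases can be verified by direct inspection.
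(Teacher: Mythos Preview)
Your case split on the terminal value of $b_1$ is precisely the paper's argument: your $b_1=0$ branch is the paper's Case-2 (invoking Corollary~\ref{cor:r2rm1} on $b+1$ pairwise $(2b-1)$-separated vertices, using Lemma~\ref{lemma1} to guarantee that case-(a) iterations also mark all of $\text{N}_{2b-2}[f_i]$), and your $b_1>0$ branch is the paper's Case-1 (invoking Lemma~\ref{lemma2} for each of the $\lceil 0.75b\rceil+1$ failed articulation tests and then Lemma~\ref{lemma4}).

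Regarding the ``hard part'' you flag: the paper's own proof does \emph{not} address distinctness of the long cycles. It simply asserts ``There are at least $0.75b+1$ cycles of length at least $3b$'' and applies Lemma~\ref{lemma4}. So your proposal is already at least as complete as the published proof, and you have correctly identified a step the paper glosses over. One small inaccuracy in your discussion: the cycle produced by Lemma~\ref{lemma2} need not literally contain $f_i$; rather, it lies on the $f_i$--$r$ path with its nearer articulation point at distance $<0.25b$ from $f_i$ and its farther one at distance $>1.75b$. The pairwise $(2b-1)$-separation of the $f_i$'s then forces the corresponding entry articulation points to be roughly $1.5b$-separated, which indeed makes any shared cycle correspondingly longer and supports either of the patches you sketch.
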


\begin{proof}
The algorithm returns \BADG{}, we have two cases:

\begin{description}
\item[Case-1:]
The number of times, the marking radius $2b-2$ is used, is greater than $0.75b$. There are at least $0.75b+1$ cycles of length at least $3b$, by Lemma~\ref{lemma4}, $b(G) \ge b+1$. In  Algorithm~\ref{algo:cactus}, the case occurs when $b_2\le 0$.
\item[Case-2:]
The sum of the number of times marking radius $2b-2$ and $1.75b$  are used is greater than $b$. There are $b+1$ vertices at a pairwise distance of at least $2b-1$  and hence by Corollary~\ref{cor:r2rm1}, $b(G) \ge b+1$. In  Algorithm~\ref{algo:cactus}, the case occurs when $ b_1 + b_2\le 0$.
\end{description}

\end{proof}

In the algorithm, to get a burning sequence, the burning ranges $2b-2$ and $1.75b$ are used $0.75b$ and $0.25b$ times, respectively. To get these many radii, the maximum   length of the burning sequence needed is $2.75b$. With this and Lemma~\ref{lemma6}, we have the following Theorem.
\begin{theorem}
A polynomial time $2.75$-approximation algorithm exists for \gb{} on cactus graphs.
\end{theorem}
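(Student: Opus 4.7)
The plan is to wrap \BGC{} in an outer loop that guesses the target value of $b$, invoking $\BGC(G,b)$ for $b = 1, 2, 3, \ldots$ (or, more efficiently, via binary search) and taking $b^{\star}$ to be the smallest guess for which the procedure returns a sequence rather than \BADG{}. Applying Lemma~\ref{lemma6} to the guess $b = b^{\star} - 1$ immediately gives $b(G) \ge b^{\star}$, so $b^{\star}$ is a valid lower bound on the true burning number.

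Next I argue that the data returned by $\BGC(G, b^{\star})$ can be packaged as a genuine burning sequence of length $\lceil 2.75\, b^{\star} \rceil$. The procedure outputs at most $\lceil 0.75\, b^{\star} \rceil$ sources in $BS_2$ intended to burn with radius $2b^{\star}-2$ and at most $\lceil 0.25\, b^{\star} \rceil$ sources in $BS_1$ intended to burn with radius $1.75\, b^{\star}$; by Lemma~\ref{lemma1}, whenever the algorithm promotes $f$ to an articulation point $v_k$, the smaller ball $\text{N}_{1.75b^{\star}}[v_k]$ already contains $\text{N}_{2b^{\star}-2}[f]$, so the union of all the chosen balls exhausts $V(G)$. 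I then schedule the $BS_2$-sources in the first positions of a length-$\lceil 2.75\, b^{\star} \rceil$ sequence and the $BS_1$-sources immediately after; since the $i$-th source of a length-$L$ burning sequence fires with effective radius $L-i$, the prescribed positions deliver radii at least $2b^{\star}-2$ and $1.75\, b^{\star}$ respectively, matching the radii the algorithm used. The remaining slots can be padded arbitrarily because every vertex of $G$ is already marked.

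Combining the two steps, the produced burning sequence has length at most $\lceil 2.75\, b^{\star} \rceil \le \lceil 2.75\, b(G) \rceil$, which yields the approximation ratio $2.75$. Polynomial runtime follows because each call to \BGC{} performs only a polynomial number of BFS-based farthest-vertex queries and ball-markings, and the outer loop runs at most $n$ iterations since $b(G) \le n$.

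The only delicate step is the position-versus-radius bookkeeping in the final sequence: one has to verify that placing the $\lceil 0.75\, b^{\star} \rceil$ large-radius sources first and the $\lceil 0.25\, b^{\star} \rceil$ medium-radius sources next still leaves each of them with enough burning power under the length-$\lceil 2.75\, b^{\star} \rceil$ indexing. Beyond that arithmetic check, the theorem reduces to a direct invocation of Lemmas~\ref{lemma1} and~\ref{lemma6} together with the standard trick of turning a radii-budget covering into a burning sequence.
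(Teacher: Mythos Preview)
Your proposal is correct and follows essentially the same route as the paper: the paper's proof is a one-line remark that a burning sequence of length $2.75b$ supplies $0.75b$ radii of size at least $2b-2$ together with $0.25b$ radii of size at least $1.75b$, and then invokes Lemma~\ref{lemma6}. You simply spell out the surrounding details the paper leaves implicit---the outer search for $b^{\star}$, the explicit placement of $BS_2$ before $BS_1$ so the position-to-radius arithmetic works, and the polynomial running-time check---none of which changes the argument. One small note: your appeal to Lemma~\ref{lemma1} to conclude that the chosen balls exhaust $V(G)$ is unnecessary, since that fact is immediate from the termination condition $|M|=|V(G)|$ of the while-loop; Lemma~\ref{lemma1} is really part of the machinery behind Lemma~\ref{lemma6} rather than something you need here.
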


\begin{algorithm}[!htbt]
\label{algo:cactus}
\DontPrintSemicolon
\SetKwInOut{Input}{Input}\SetKwInOut{Output}{Output}
\Input   {$G=(V,E) $ and a positive integer $b$.}
\Output  {Returns burning sequence or \BADG{} (-1).}
\SetKwFunction{FMain}{BURN-GUESS}
\SetKwProg{Fn}{Function}{}
\Fn{\FMain{$G$,$b$}} {
\Begin{
$BS_1 \gets [ \; ]$\;
$BS_2 \gets [ \; ]$\;
$b_1 \gets \lceil 0.25b\rceil $\;
$b_2 \gets \lceil 0.75b\rceil $\;
$r \gets \RA$\;
$M \gets \phi$\; \Comment{Set $M$ contains all the marked vertices}
\While{$|M| \ne |V(G)|$}{
	$f \gets \argmax_x\{d(r,x)\; where \; x\notin M\} $\;
	\Comment{ A vertex $f$ is the farthest unmarked vertex from $r$}
	$v_k \gets \AR$\;
	\Comment{If there is an
		articulation point $v_k$ towards the root $r$ such that $0.25b \le d(f, v_k)\le 1.75b$, $\AR$ returns a vertex $v_k$; otherwise $\AR$ returns $-1$.}
	\If{$v_k \ne -1$ and $b_1 \ge 1$}{
		$append(BS_1,v_k)$\;
		$M\gets M \cup \{ v \in V(G): d(v_k,v)\le 1.75b \}$\;
		$b_1 \gets b_1 - 1 $\;
	}
	\ElseIf{$b_2 \ge 1$}{
		$append(BS_2,f)$\;
		$M\gets M \cup \{ v \in V(G): d(f,v)\le 2b-2 \}$\;
		$b_2 \gets b_2 -1 $\;
	}
	\Else{
		\KwRet{$-1$}
	}
	
}
\KwRet{$(BS_1,BS_2)$}
\Comment{concatenation of sequence $BS_1$ with $BS_2$.}			
}

}
\caption{Approximation algorithm for cactus graph \GR.}
\end{algorithm}
\section{$3$-approximation algorithm for directed trees}
\label{sec:tree}

No known approximation algorithms exist for graph burning on directed graphs or directed trees. The $2$-approximation algorithm of undirected trees will not scale to directed trees as shown in \figurename~\ref{fig:2approxdtree}. This section presents a $3$-approximation algorithm for graph burning on directed trees (polytree). The algorithm leads to a $ 2$-approximation algorithm for graph burning on the single-rooted tree (arborescence).

\begin{figure}
\centering
\def\svgwidth{\columnwidth}
\scalebox{0.15}{\includegraphics{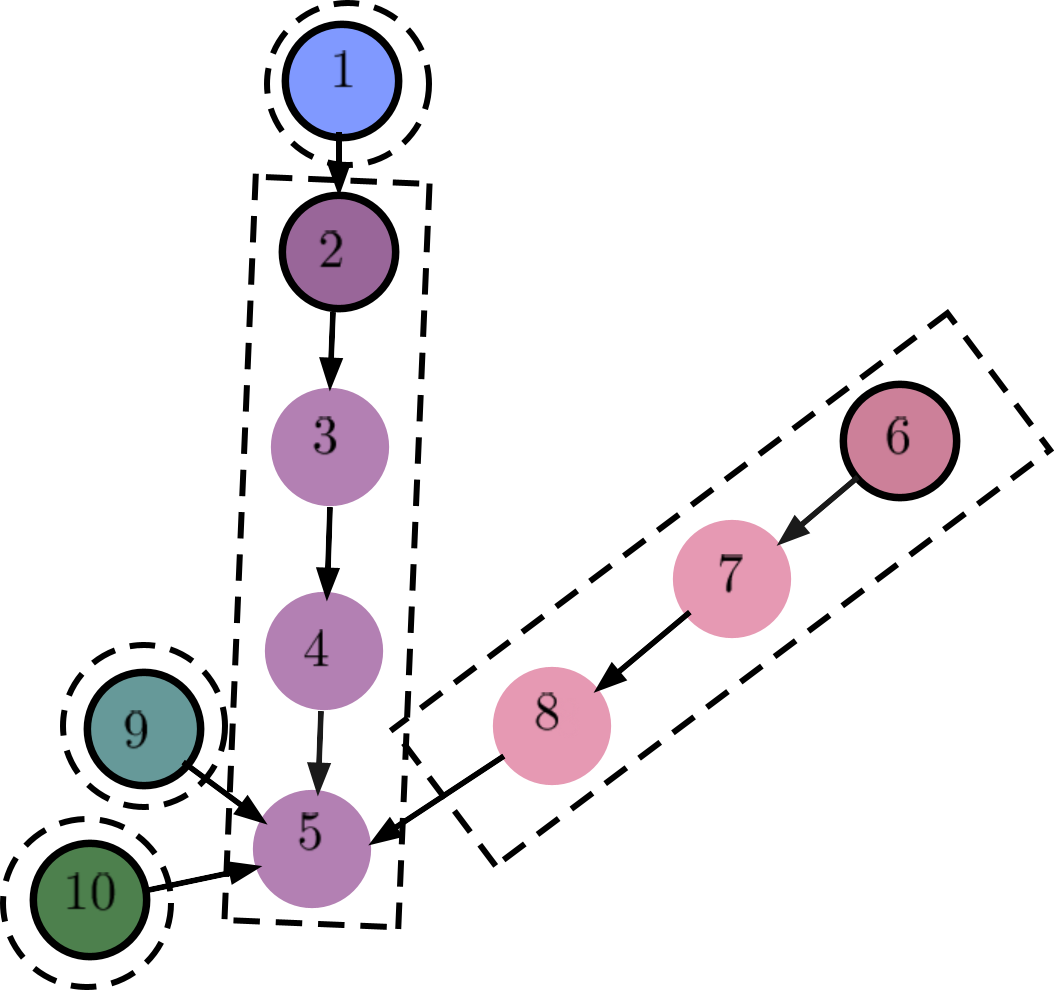}}
\caption{When roots are assumed in the sequence $[1, 6, 9, 10]$, known $2$-approximation algorithm for Tree gives lower bound $5$. But the optimal burning number of this tree is $4$.}
\label{fig:2approxdtree}
\end{figure}

Every directed tree has at least one vertex with in-degree zero and one vertex with out-degree zero.

\begin{algorithm}

\KwIn{A tree $T$ and a positive integer $b$.}
\KwOut{Resulting tree ($T'$) after $b$-cutting.}

\SetKwFunction{FMain}{}
\SetKwProg{Fn}{b-CUTTING}{:}{}
\Fn{\FMain{$T$, $b$}}{
$T' \leftarrow T $ \;
$T'' \leftarrow T$ \;

\For{$i = 1$ to $b$}
{

\ForEach{$v \in  T'$ }
{
\If{$\ID(v) = 1$ and $\OD(v)=0$}{
	$T'' \gets T'' \backslash \{v\}$\;
}

}
$T' \gets T''$\;
}	
\KwRet{$T'$}	

}
\caption{$b$-cutting}
\label{algo:bcut}
\end{algorithm}

\textbf{\BC{} process}  :
If $T$ is a directed tree,  we remove all the vertices with out-degree zero and in-degree one from the tree.  Let the resulting tree be $T^1$.  We call $T^1$ a $1$-cutting tree. Having cut vertices with out-degree zero and in-degree one of $T^1$, we get a tree $T^2$. We call $T^2$ a $2$-cutting tree. Having cut all vertices with out-degree zero and in-degree one of $T^2$,  we get a tree $T^3$.  We call $T^3$ a $3$-cutting tree. So after $b$ cutting, the resulting tree is $T^b$.  We call $T^b$ as a $b$-cutting tree of $T$.

The \BC{} process, as given in Algorithm~\ref{algo:bcut} returns a tree $T^b$. In the tree, $T^b$, vertex $v$  with out-degree zero is considered as a center of the fire.  We divide the centers into two sets, $BS$ and $BS'$.  The set $BS$ contains all the centers with in-degree $1$ and out-degree $0$ in tree $T^b$. The set $BS'$ contains all the centers with in-degree greater than one and out-degree $0$. After \BC{} if tree $T^b$ is not empty, we do \BC{} again on $T^b \backslash (BS \cup BS')$. This process is repeated until the tree becomes empty.


\BC{} returns tree $T^b$ of $T$.  $\forall c \in V(T^b)$,   if   $D^{-}(c) >1$ and $D^{+}(c) =0$ then we add  $c$ in $BS'$ and if $D^{-}(c) =1$ and $D^{+}(c) =0$, we add $c$ in $BS$.
The \GC procedure, in turn, calls \BC($T,b$) and returns the sets $BS$ and $BS'$.
\begin{lemma}
If $|BS|> b$, $b(T)> b$.
\label{lemmabs}
\end{lemma}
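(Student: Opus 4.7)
The plan is to argue by contradiction. Assume a burning sequence $(s_0, s_1, \ldots, s_{b-1})$ of length $b$ burns all of $T$. I will exhibit $|BS|$ deep ``witness leaves,'' each of which can only be burned by a source drawn from its own dedicated descending chain, and then derive a contradiction by pigeonhole.

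First I would unpack the definition of $BS$ to extract a structural witness for each $v \in BS$. Let $c(\cdot)$ denote the cutting round in which a vertex is removed by the iterated \BC{} process (with $c(v) = \infty$ if $v$ is never removed). Since $v$ has in-degree $1$ in $T$ and out-degree $0$ in $T^b$ yet survives $b$ rounds of cutting, a short induction gives $c(v) = b+1$; repeatedly choosing an out-neighbor whose cutting round is one less produces a directed chain $P_v: v = u_0 \to u_1 \to \cdots \to u_b = \ell_v$ with $c(u_j) = b+1-j$, every $u_j$ of in-degree $1$ in $T$ (else $c(u_j) = \infty$), and $\ell_v$ a genuine leaf of $T$.

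Next I would prove that the chains $\{P_v\}_{v \in BS}$ are pairwise vertex-disjoint: if $u_j = u'_k$ for two such chains, matching cutting rounds force $j = k$, and the unique-in-neighbor property then propagates the coincidence backwards along both chains, eventually giving $v = v'$. Consequently the terminal leaves $\{\ell_v\}_{v \in BS}$ are pairwise distinct.

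Fix now such a burning sequence and any $v \in BS$. Because every vertex on $P_v$ has in-degree $1$, the only vertices from which $\ell_v$ is reachable at directed distance at most $b$ are the chain vertices $u_0, \ldots, u_b$; anything strictly above $v$ sits at directed distance $\ge b+1$ from $\ell_v$. Hence a source $s_i$ can burn $\ell_v$ only if $s_i = u_j \in P_v$ with $b - j \le b - 1 - i$, i.e., $j \ge i+1$. Combined with chain disjointness, each $s_i$ lies on at most one chain $P_v$ and therefore burns at most one witness leaf; with only $b$ sources and $|BS| > b$ distinct witness leaves, some $\ell_v$ is left unburned --- the desired contradiction. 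The delicate point is isolating the in-degree-$1$ invariant along the chains; without it chains could merge from below through a higher-in-degree vertex and a single source might cover several witness leaves, breaking the pigeonhole step.
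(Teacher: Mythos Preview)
Your argument is correct and follows essentially the same route as the paper: associate to each center in $BS$ a descending directed path of length $b$, observe that no single fire source of radius at most $b-1$ can cover the deep end of two such paths, and finish by pigeonhole. The paper phrases the key step as ``a common ancestor of $c_i$ and $c_j$ is at distance at least $b+1$ from the farthest nodes of $P_i$ and $P_j$,'' while you phrase it as ``any source burning $\ell_v$ must already lie on the in-degree-$1$ chain $P_v$''; these are the same obstruction. Your version is more careful---you explicitly justify the in-degree-$1$ invariant along each chain via the cutting-round function and derive chain disjointness from it, points the paper leaves implicit---but the underlying idea and structure of the proof coincide.
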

\begin{proof}
Let $\{c_1,c_2,c_3 \cdots c_{b+1}\}$ be the centers in $BS$. Each center has an associated directed path of length $b$. To burn all the centers and vertices in their associated paths in at most $b$ rounds, a vertex should exist that can burn at least two centers and their associated paths.

Let $c_i$ and $c_j$ be the two centers and their associated paths be $P_i$ and $P_j$ respectively. Let $u \in P_i$ and $v \in P_j$ be the farthest nodes in the $b$-length paths of $P_i$ and $P_j$ respectively. To burn $c_i$ and $c_j$ as well as $u$ and $v$ with a single source of fire, that node should be a common ancestor of $c_i$ and $c_j$. Such a common ancestor, which is at least at a distance $1$ from $c_i$ and $c_j$, will be at a distance at least $(b+1)$ from $u$ and $v$. Therefore a single source of fire can not burn two centers and their associated paths in $b$ rounds. Hence $b(T)>b$.

\end{proof}

\begin{algorithm}[!htbt]
\label{algo:lb3}
\DontPrintSemicolon
\SetKwInOut{Input}{Input}\SetKwInOut{Output}{Output}
\Input   {$T=(V,E) $ and a positive integer $b$.}
\Output  {RETURN burning sequence or -1.}
\SetKwFunction{FMain}{\GCT}
\SetKwProg{Fn}{Function}{}
\Fn{\FMain{$T$,$b$}} {
\Begin{
$i \gets b$\;
$BS \gets \phi$\;
$BS' \gets \phi$\;
\While{$i > 0$ and $|V(T)|>0$}{
$T' \gets \BC(T,b-1)$\;
\Comment{$V(T')$ is the set of all vertices in tree  $T'$.}
\ForEach{$v \in V(T')$}{
	\If{ $\OD(v)=0$}{
		\If{$\ID(v) \le 1$}{
			$append(BS, v)$\;
			\Comment{After b-cutting, Set $BS$ will contain vertices having at most one in-degree.}
		}
		\Else{
			$append(BS', v)$\;
			\Comment{After b-cutting, Set $BS'$ will contain vertices having more than one in-degree.}
		}
		
		$T \gets T \; \backslash \; \{ u: d(u,v)\le b \; \text{and} \;  u\in V \}$\;
		\Comment{Remove all vertices from $T$ which are at $b$ distance from the center of fire $v$.}
	}
}
$i \gets i-1$\;
}
\If { $|BS|>b$ or  $|BS'|>b$}{
\KwRet{$-1$}
}
\KwRet{$(BS, BS')$} \;
\Comment{$(BS, BS')$ is concatenation of the two burning sequences $BS$ and $BS'$.}
}
}
\caption{Burning number for multi-rooted directed tree $T$.}
\end{algorithm}

\begin{lemma}
\label{lemmabsd}
If $|BS'|> b$ then $b(T)> b$.
\label{lem:first}
\end{lemma}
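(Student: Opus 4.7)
The plan is to mirror the argument of Lemma~\ref{lemmabs}. Assume for contradiction that $|BS'| \ge b+1$, and enumerate the centers as $c_1, c_2, \ldots, c_{b+1}$. Each $c_i$ has $D^+(c_i) = 0$ and $D^-(c_i) \ge 2$ in the $(b{-}1)$-cutting tree at the moment it was harvested. Because \BC{} removes only vertices whose in-degree is exactly $1$, in-degrees of surviving vertices are preserved, so in the original tree $T$ every $c_i$ already has at least two distinct direct in-neighbors $u_{i,1}, u_{i,2}$.

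The first step is to attach to each $c_i$ an associated directed path of length $b$, exactly in the role played by the descendant path in Lemma~\ref{lemmabs}. If $c_i$ originally had out-neighbors in $T$, then since all of them were stripped away during the $b-1$ cutting rounds, at least one descendant chain rooted at $c_i$ must have length $\ge b-1$, giving a length-$b$ downward path $P_i$. If $c_i$ was already a sink in $T$, I would instead produce $P_i$ by walking upward from one of $u_{i,1}$ or $u_{i,2}$, using that $c_i$ was not swallowed by an earlier outer iteration's $b$-neighborhood deletion; any preceding center lies more than $b$ away, so at least one of $c_i$'s two ancestral branches still contains enough remaining vertices to furnish a length-$b$ path ending at $c_i$. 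The polytree structure is important here: $u_{i,1}$ and $u_{i,2}$ lie in disjoint undirected subtrees once $c_i$ is removed, so the two branches above them are independent.

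The second step is the pigeonhole echo of Lemma~\ref{lemmabs}. A burning sequence of length $b$ supplies only $b$ sources to burn $b+1$ centers, so two centers $c_i$ and $c_j$ must be dealt with by a single source $x_t$. This source is forced to be a directed common ancestor of both $c_i$ and $c_j$, so it lies at directed distance $\ge 1$ from each, and consequently at distance $\ge b+1$ from the far endpoint of $P_i$. Since a source ignited at time $t$ reaches at most directed-distance $b-t \le b-1$, it cannot burn that endpoint within $b$ rounds, contradicting the assumption $b(T) \le b$.

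The main obstacle is the first step: producing $P_i$ when $c_i$ is a sink of the original tree with both in-neighbors being trivial roots. Unlike $BS$-centers, a $BS'$-center can have $D^+(c_i) = 0$ in $T$ and still survive for free thanks to $D^-(c_i) \ge 2$, so no downward chain is guaranteed by the $b$-cutting alone. The delicate part is therefore to combine the $b$-neighborhood pruning performed by the outer loop of \GCT{} with the polytree's undirected tree structure to certify, in this residual case, either a long upward path through one in-neighbor or (failing that) that the two in-neighbors themselves already certify an unburnable configuration that forces $b(T)>b$ directly. A careful case analysis on $D^+(c_i)$ in $T$, combined with the observation that each $c_i$ has in its outer-iteration history been kept away from every other chosen center by distance $>b$, is what I would use to push the argument through.
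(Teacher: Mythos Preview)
Your route is quite different from the paper's, and the obstacle you single out is precisely why.  The paper never attaches a length-$b$ directed path to a $BS'$ center.  Instead it exploits the in-degree condition directly: every $c_i\in BS'$ has at least two in-neighbors in $T$, and because the underlying undirected graph of a polytree is a tree, two distinct centers can share at most one in-neighbor (two shared in-neighbors would close an undirected $4$-cycle).  From the $b{+}1$ centers one therefore extracts at least $b{+}1$ distinct in-neighbors (``ancestors''), and these are argued to require $b{+}1$ separate fire sources, contradicting $b(T)\le b$.  No descendant chains, no pigeonhole on a common ancestor of a pair of centers.

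Your Lemma~\ref{lemmabs}-style plan stalls exactly where you say it does.  A $BS'$ center can be a sink of $T$ from the outset and still survive every cutting round, since $D^-\ge 2$ exempts it from removal; in that case there is no downward chain at all, and the two upward branches through its in-neighbors may be arbitrarily short (both in-neighbors could simply be roots of the polytree).  Your fallback---that the two in-neighbors ``already certify an unburnable configuration''---is left as a promise rather than an argument, and in isolation it is not enough: two roots feeding a single sink are burned in two rounds.  What actually saves the situation is the \emph{aggregate} count of in-neighbors across all $b{+}1$ centers together with the acyclicity constraint, which is precisely the paper's argument.  So the gap you flag in your first step is real, and the natural fix is to drop the per-center path construction and argue globally via in-neighbors, as the paper does.
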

\begin{proof}
We prove by contradiction, let $b(T)=b$.  Let $c_i$ and $c_j$ be centers in $BS'$. Hence their in-degree is greater than $1$. Each center has at least two ancestors, but there can not be more than one common ancestor between any pair of centers, as that would form an undirected cycle. As $BS' > b$, there exists at least $(b+1)$ different ancestors who need at least $(b+1)$ sources of fire to burn these ancestors. Therefore, $b(T) > b$.

\end{proof}
\GCT{} procedure is given in Algorithm~\ref{algo:lb3}. This procedure,   based on the cardinality of $|BS|$ and $|BS'|$, either  concludes that $b(T)>b$ or both $|BS|\le b$ and $|BS'|\le b$. In the latter case, the graph can be burned in $|BS| + |BS'| + b$ $\le 3b$ rounds.
\begin{theorem}
A polynomial time $3$-approximation algorithm exists for the multi-rooted directed tree.
\end{theorem}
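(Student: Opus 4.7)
The plan is to wrap the procedure \GCT in a standard guess-and-verify loop: I would call \GCT for $b=1,2,3,\ldots$, and by Lemmas~\ref{lemmabs} and~\ref{lemmabsd} every return of $-1$ is a witness that $b(T)>b$. Hence the first value $\widehat b$ for which \GCT outputs a pair $(BS,BS')$ satisfies $\widehat b \le b(T)$, and I would take the sequence from that run, padded with arbitrary vertices if necessary, as the algorithm's output burning sequence.

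For the length bound, the explicit tests $|BS|>b$ and $|BS'|>b$ inside \GCT force $|BS|,|BS'| \le \widehat b$ whenever a pair is returned. Concatenating $BS$, $BS'$, and a padding of $\widehat b$ arbitrary vertices produces a burning sequence of length $|BS|+|BS'|+\widehat b \le 3\widehat b \le 3\,b(T)$, which gives the claimed approximation ratio of $3$. Polynomial running time is then clear: \BC runs in $O(b(|V|+|E|))$ time, \GCT performs at most $b$ outer iterations of $b$-cutting, and I try at most $|V|$ candidate values of $b$.

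What remains is to verify that the concatenation above actually burns every vertex of $T$ by round $|BS|+|BS'|+\widehat b-1$. Each $u \in V(T)$ is deleted inside \GCT at some iteration via the line that strips the $\widehat b$-neighborhood of a chosen source $v \in BS \cup BS'$, so it suffices to prove that the fire lit at $v$ reaches $u$ within $\widehat b$ additional rounds. The key structural fact I would isolate is: if $v$ is a sink of the $(\widehat b-1)$-cut tree of $T$, then every vertex of the current $T$ lying in the $\widehat b$-neighborhood of $v$ is a descendant of $v$ reachable from $v$ by a directed path of length at most $\widehat b$, because $(\widehat b-1)$-cutting peels off exactly the dangling sink-chains of depth at most $\widehat b-1$ hanging below $v$.

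The main obstacle I anticipate is making this structural claim airtight across the successive outer iterations of \GCT: after the sinks of iteration $i$ are processed and their $\widehat b$-neighborhoods stripped, I must argue that the recomputed cut tree at iteration $i+1$ continues to have the property that each newly exposed sink dominates exactly the remaining vertices within its $\widehat b$-neighborhood in the directed-distance sense, so that no vertex survives the main while-loop undeleted. Combining this invariant with Lemmas~\ref{lemmabs} and~\ref{lemmabsd} then yields the theorem.
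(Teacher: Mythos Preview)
Your argument follows exactly the scheme the paper uses: wrap \GCT{} in a linear search over $b$, invoke Lemmas~\ref{lemmabs} and~\ref{lemmabsd} to certify every $-1$ as a lower-bound witness, and on the first success bound the output length by $|BS|+|BS'|+\widehat b\le 3\widehat b\le 3\,b(T)$. The paper's own proof is literally those two sentences; it does not separately argue that the returned centres cover $V(T)$, so the extra verification you sketch is additional rigour rather than a different route.

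One precision issue in your ``structural fact'': as phrased it is false, because the undirected $\widehat b$-neighbourhood of a sink $v$ certainly contains the in-neighbour(s) of $v$, which are not descendants of $v$. What you actually need (and what the $(\widehat b-1)$-cutting guarantees) is that every vertex \emph{out-reachable} from $v$ in the current $T$ lies at directed distance at most $\widehat b-1$ from $v$: a sink $v$ of the cut tree can have no out-reachable vertex of in-degree $\ge 2$ (such a vertex would never be peeled and would keep $v$ from becoming a sink), so the whole out-subtree below $v$ is an arborescence of height $\le \widehat b-1$. Reading the deletion line in Algorithm~\ref{algo:lb3} as removing $v$ together with exactly these descendants, your invariant across outer iterations then goes through with no further obstacle, and the padded sequence burns everything for the reason you state.
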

\begin{proof}
If the algorithm returns \BADG{} for $b$, either $BS > b$ or $|BS'| > b$ and by Lemma~\ref{lemmabs} and Lemma~\ref{lemmabsd} we have $b(T) \ge b+1$. As the algorithm returns a burning sequence of length $3b+3$ for $b+1$, we get a $ 3$-approximation.

\end{proof}

Note that, for  single-rooted trees, $|BS'| = 0$,  in which case, Algorithm~\ref{algo:lb3} leads to a $ 2$-approximation for  single-rooted directed trees.

\begin{theorem}
A polynomial time $2$-approximation algorithm exists for graph burning on the single-rooted directed tree.
\label{thm:multiRootedTree}
\end{theorem}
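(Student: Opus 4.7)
The plan is to specialize the $3$-approximation analysis of Algorithm~\ref{algo:lb3} to the arborescence case, where one structural feature collapses the bound. The key observation is that in a single-rooted directed tree, every non-root vertex has in-degree exactly one, so no vertex can have in-degree greater than one; hence the bucket $BS'$ (which, by its definition in the algorithm, collects only centers with $D^-(v) > 1$ in the cut tree) must stay empty at every iteration of the outer while-loop. This is the single change relative to the multi-rooted proof, and the remaining structure of Algorithm~\ref{algo:lb3} and Lemma~\ref{lemmabs} carries over unchanged.

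With $BS' = \emptyset$ baked in, I would invoke Algorithm~\ref{algo:lb3} with successive guesses $b = 1, 2, 3, \ldots$ and stop at the smallest $b^{*}$ for which it does not return \BADG{}. On any failed guess $b^{*} - 1$, the only way to fail is $|BS| > b^{*}-1$ (the $BS'$-branch of the failure test is vacuous), and Lemma~\ref{lemmabs} then delivers $b(T) \ge b^{*}$. On the successful run at $b^{*}$, the returned burning sequence uses $|BS| \le b^{*}$ chosen centers and, together with at most $b^{*}$ further rounds for fires to propagate through subtrees of depth at most $b^{*}$ guaranteed by \BC{}, has length at most $|BS| + b^{*} \le 2b^{*}$. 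Comparing $2b^{*}$ against the lower bound $b(T) \ge b^{*}$ yields an approximation ratio of $2$.

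The only point I expect to require care is bookkeeping: one must verify that the while-loop empties $V(T)$ in the single-rooted case (so the sequence burns every vertex), and that the $b^{*}$ padding rounds can be filled by arbitrary already-burnt vertices without affecting validity. Both points are inherited from the multi-rooted analysis, so I anticipate no new obstacle beyond restating the $BS' = \emptyset$ collapse. Polynomial running time is immediate because each call to Algorithm~\ref{algo:lb3} is polynomial and at most $|V(T)|$ guesses are ever tried.
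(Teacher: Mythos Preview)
Your proposal is correct and follows essentially the same approach as the paper: the paper's entire argument is the one-line observation preceding the theorem that for single-rooted trees $|BS'|=0$, whence Algorithm~\ref{algo:lb3} yields a $2$-approximation, and you have simply unpacked that remark with the appropriate bookkeeping (the iterative guessing, the $|BS|+b^*\le 2b^*$ count, and the lower bound via Lemma~\ref{lemmabs}).
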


\subsection{$1.905$-approximation for single rooted Tree}
\label{sec:dtree}

This paper presents a $ 1.905$-approximation algorithm for arborescence. The procedure is shown in Algorithm~\ref{algo:lb4}. The algorithm uses two methods \GCST{} given in Algorithm~\ref{algo:lb4}  and \merge{} given in Algorithm~\ref{algo:lb2}. The \GCST{} function returns only the set $BS$, which contains roots of the $b$ height subtrees. Note that $BS'=\phi$ because each node has at most one in-degree in an arborescence. The procedure \merge{} takes the $BS$ set and merges subtrees whose roots have the lowest common ancestor (LCA) within $\le 0.81b$ distance. The lowest common ancestor of the two roots will be the new root or the center of the fire.
\newpage
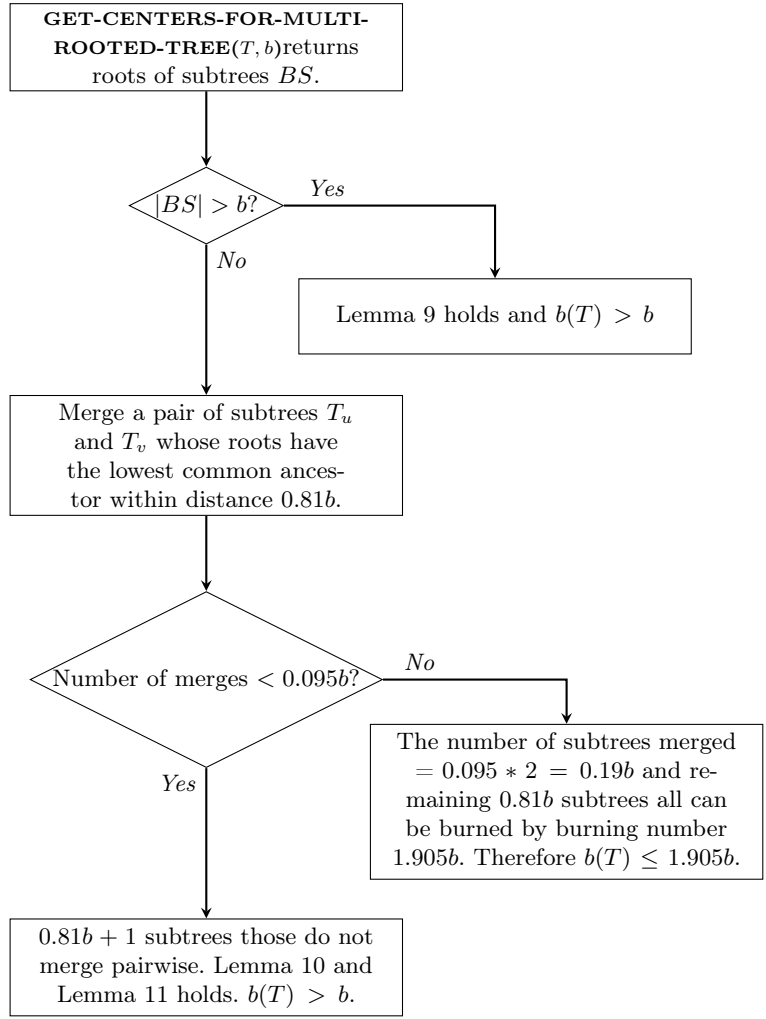
\begin{figure}[ht]
\centering
\begin{tikzpicture}[node distance=1cm]
\node[process2] (start) {\GCT returns roots of subtrees $BS$.};
\node[below = of start,decision1] (box3) {$|BS|> b$?};
\node[below right = of box3,process2] (box4) {Lemma~\ref{lem:first1} holds and $b(T)>b$};
\node[below =  2cm and -3cm of box3,process2] (box5) {Merge a pair of  subtrees $T_u$ and $T_v$ whose roots have the lowest common ancestor within  distance $0.81b$.};
\node[below = of box5,decision1] (box6) {Number of merges $<0.095b$?};
\node[below right = 0cm and 1cm of box6,process2] (box7) {The number of subtrees merged = $0.095*2 = 0.19b$ and remaining $0.81b$ subtrees all can be burned by burning number $1.905b$. Therefore $b(T)\le 1.905b$.};
\node[below =  2cm and -3cm of box6,process2] (box8) {$0.81b+1$ subtrees those do not merge pairwise. Lemma~\ref{lm:bvertex} and Lemma~\ref{lem:ele} holds. $b(T)>b$.};
\draw[arr] (start) -- (box3);
\draw[arr] (box5) -- (box6);
\draw[arr] (box6) -- (box8)node[pos=0.1,left ]{\textit{Yes}};
\draw[arr] (box6) -| (box7) node[pos=0.1,above ]{\textit{No}};
\draw[arr] (box3) -| (box4) node[pos=0.1,above ]{\textit{Yes}};
\draw[arr] (box3) -- (box5) node[pos=0.1,right]{\textit{No}};
\end{tikzpicture}
\caption{This flowchart depicts the idea of the $1.905$-approximation algorithm.}
\label{fig:merge}
\end{figure}

The main idea of the algorithm is shown as a flowchart in \figurename~\ref{fig:merge}.
\begin{lemma}
If $|BS|>b$ then $b(T)>b$.
\label{lem:first1}
\end{lemma}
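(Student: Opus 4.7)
The plan is to mirror the argument of Lemma~\ref{lemmabs} used for multi-rooted directed trees, simplified for the arborescence setting where every non-root vertex has in-degree exactly one and $BS' = \emptyset$. I would first unpack what membership in $BS$ means after the \BC{} process: a vertex $c$ enters $BS$ only after surviving $b$ rounds of cutting, which means there is a directed path of length $b$ in $T$ starting at $c$ and ending at some leaf. Call this leaf $u_c$ — it is at distance exactly $b$ from $c$ along directed edges.

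Next I would argue by contradiction. Suppose $|BS| \geq b+1$ and assume for contradiction that there is a burning sequence of length at most $b$. Since the sequence has $b$ sources but $BS$ contains at least $b+1$ centers, by pigeonhole two distinct centers $c_i, c_j \in BS$ together with their associated leaves $u_i, u_j$ must all be burned by (or be included in the forward reach of) a single fire source $v$. Because the tree is directed and fire only travels along out-arcs, $v$ must be a common ancestor of both $c_i$ and $c_j$; in the arborescence setting this ancestor is unique for each pair and necessarily strictly above both centers (centers have out-degree $0$ after cutting, so neither can be an ancestor of the other). Hence $d(v, c_i) \geq 1$ and $d(v, c_j) \geq 1$.

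From here the distance computation is immediate: $d(v, u_i) = d(v, c_i) + b \geq b+1$, and similarly $d(v, u_j) \geq b+1$. But any fire source in a burning sequence of length $b$ spreads fire only up to distance $b-1$ by the time the process ends, so $v$ cannot burn $u_i$ in time. This contradiction shows no single source can cover two centers of $BS$, so we need at least $|BS| \geq b+1$ distinct sources, forcing $b(T) > b$.

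The main obstacle — and really the only subtle point — is justifying that the common ancestor $v$ is strictly above both $c_i$ and $c_j$ rather than coinciding with one of them; this relies on noticing that $c_i, c_j$ have out-degree $0$ in the cut tree and are distinct, so neither lies on the directed path from the other toward the root. Once this is pinned down, the distance arithmetic carries the proof without further work, and the rest of the argument is essentially identical to Lemma~\ref{lemmabs}.
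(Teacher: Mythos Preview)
Your proposal is correct and follows essentially the same approach as the paper: the paper's own proof of Lemma~\ref{lem:first1} simply says ``The proof is similar to Lemma~\ref{lemmabs},'' and you have spelled out precisely that argument (pigeonhole on the $b+1$ centers, common-ancestor distance $\ge 1$, leaf at distance $\ge b+1$, fire reaches only $b-1$). The one point you flag as subtle---that the common ancestor must be strictly above both centers---is handled the same way in the paper's Lemma~\ref{lemmabs}, so there is no divergence.
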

\begin{proof}
The proof is similar to Lemma~\ref{lemmabs}.
\end{proof}

Suppose there are $0.81b+1$ subtrees that are not merged pairwise. We know that the  height of each subtree is $b$ except for the last subtree.  For each subtree $T_v$, a
$b$-length path $P_i$ exists from the root of $T_v$ downwards.  Take a subpath $P'_i$ of $P_i$, where $P'_i$ contains all middle $0.62b$ vertices of $P_i$; in other words,  $P'_i$ is a subpath obtained  after removing $0.19b$ vertices from both ends of path $P_i$.  Let $S$ be the union of the sets of  vertices of  the subpaths $P'_i$  obtained from the non-merged subtrees.

\begin{algorithm}[!htbt]
\label{algo:lb2}
\DontPrintSemicolon
\SetKwInOut{Input}{Input}\SetKwInOut{Output}{Output}
\Input   {\TG and a positive integer $b$.}
\Output  {Return burning sequence or -1.}
\SetKwFunction{FMain}{\merge{}}
\SetKwProg{Fn}{Function}{}
\Fn{\FMain{$T$,$b$,$BS$}} {
\Begin{
$B \gets [0, 1, 2, 3 , \cdots 1.905b]$\;
$BS_1 \gets [\;]$\;
$BS_2 \gets [\;]$\;
\ForEach{ $v \in BS$}{
$flag \gets True$\;
\ForEach{$u \in BS$}{
	\If{$v \ne u$  and $LCA\_length(T,v,u)<0.81b$}{
		\Comment{$LCA\_length(T,v,u)$ returns length of lowest common ancestor of $u$ and $v$. }
		
		\If{$\exists \; n \in B $ s.t $n \ge 1.81b$}{
			$remove(BS,u)$\;
			$remove(BS,v)$\;
			\Comment{Remove both the centers of fire $u$ and $v$   from  $BS$.}
			
			$append(BS_2, LCA\_vertex(T,v,u))$\;
			\Comment{$LCA\_vertex(T,v,u)$ returns the lowest common ancestor of $u$ and $v$.}
			$remove(B,n)$\;
			$ flag \gets False $\;
		}
	}
}
\If{$flag = True$}{
	$S = \{n \in B \; | \; n \ge b\}$\;
	\If{$|S|>0$}{
		$append(BS_1, u)$\;
		$remove(B,min(S))$\;
		\Comment{$min(S)$ returns the minimum value in the set $S$.}
	}
	\Else{
		\KwRet{$-1$}
	}
}
}
\KwRet{$(BS_1,BS_2)$}
}
}
\caption{Merge and Burn Procedure \TG.}
\end{algorithm}

\begin{algorithm}[!htbt]
\label{algo:lb4}
\DontPrintSemicolon
\SetKwInOut{Input}{Input}\SetKwInOut{Output}{Output}
\Input   {\TG and a positive integer $b$.}
\Output  {Return burning sequence $BS$, or -1 in case it fails to burn the tree $T$.}
\SetKwFunction{FMain}{GET-CENTERS-FOR-SINGLE-ROOTED-TREE}
\SetKwProg{Fn}{Function}{}
\Fn{\FMain{$T$,$b$}} {
\Begin{
$T' \gets T$\;
$BS \gets [ \; ]$\;
\While{$T' \ne \phi $}{
$T' \gets \BC(T',b-1)$\;
\Comment{$V(T')$ is the set of all vertices in tree $T'$}
\ForEach{$v \in V(T')$}{
	\If{$\ID(v) \le 1 $ and $\OD(v)=0$}{
		$append(BS, v)$\;
		\Comment{Append center of fire $v$ in the Burning sequence $BS$.}
		$T'$ $\gets$ $T'$ $\backslash$ $\{v\}$\;
	}
}
}
\KwRet{$\merge(T,b,BS)$}
}
}
\caption{Burning number for directed tree \TG.}
\end{algorithm}

\begin{figure}[H]
\centering
\def\svgwidth{\columnwidth}
\scalebox{0.10}{\includegraphics{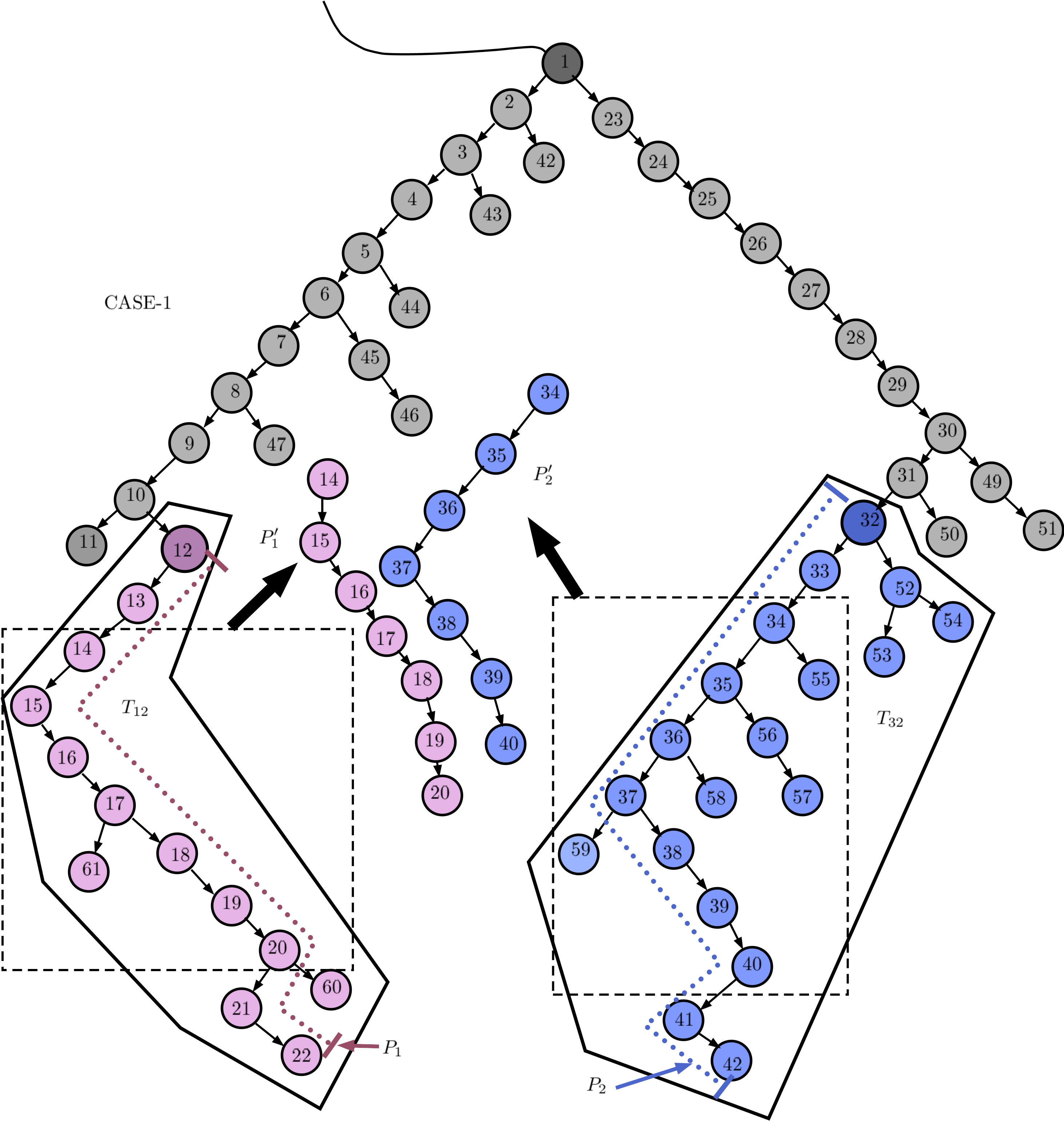}}
\caption{In this example, for $b=11$ there are two subtrees $T_{12}$ and $T_{32}$ rooted at $12$ and $32$, respectively. with paths $P_1 = [12,13,14,15,16,17,18,19,20,21,22]$ and $P_2 = [32,33,34,35,36,37,38,39,40,41,42]$. As $11*0.19 \approx 2$ ,  $P'_1 =[14,15,16,17,18,19,20]$ and $P'_2 =[34,35,36,37,38,39,40]$.}
\label{fig:case-1}
\end{figure}

\begin{figure}[H]
\centering
\def\svgwidth{\columnwidth}
\scalebox{0.10}{\includegraphics{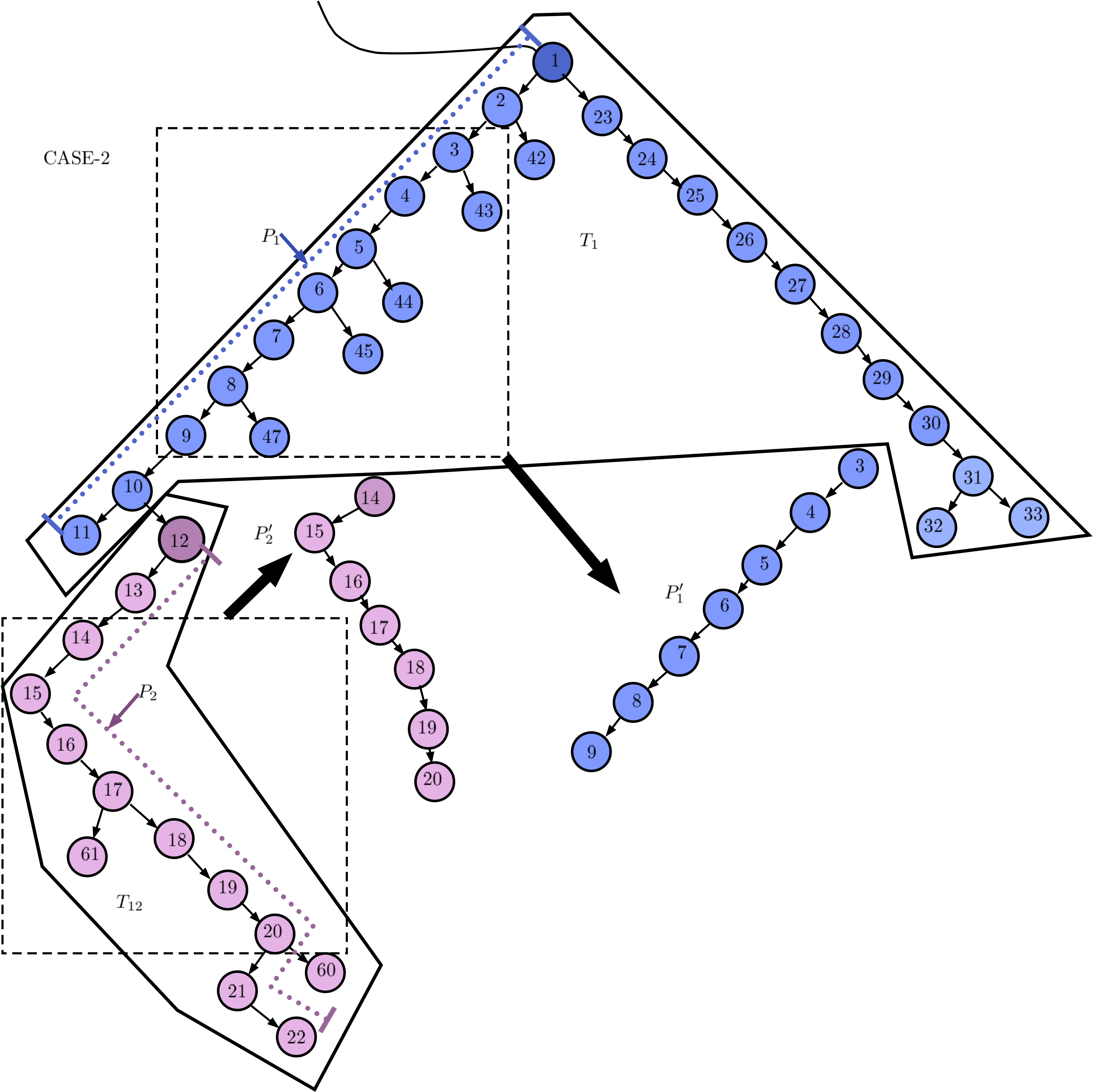}}
\caption{In this example, for $b=11$ there are two subtrees $T_1$ and $T_{12}$ at rooted $1$ and $12$, respectively. With paths $P_1 = [1,2,3,4,5,6,7,8,9,10,11]$ and $P_2 = [12,13,14,15,16,17,18,19,20,21,22]$. As  $11*0.19 \approx 2$ , $P'_1 =[3,4,5,6,7,8,9]$ and $P'_2 =[14,15,16,17,18,19,20]$.}
\label{fig:case-2}
\end{figure}

\begin{lemma}
A single source of fire with a burning range $b-1$ can burn at most $b$ vertices in $S$.
\label{lm:bvertex}
\end{lemma}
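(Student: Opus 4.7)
The plan is to fix an arbitrary vertex $x$ used as a single source of fire with burning range $b-1$, let $V_x = \{v \in S : d(x,v) \le b-1\}$, and show $|V_x| \le b$ by proving the stronger claim that $V_x$ is contained in a single middle-subpath $P'_i$ (of which each has size at most $0.62b$).

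The first step is a routine observation about arborescences: in the directed tree $T$, a vertex $v$ can be burned by source $x$ only if $x$ lies on the unique directed path from some ancestor down to $v$, that is, $x$ is $v$ or a proper ancestor of $v$. Because the $b$-height subtrees $T_{r_1},T_{r_2},\ldots$ produced by the cutting process are pairwise vertex-disjoint, if $V_x$ were to contain a vertex $p \in P'_i \subseteq T_{r_i}$ and a vertex $q \in P'_j \subseteq T_{r_j}$ with $i \ne j$, then $x$ would have to be an ancestor of both $r_i$ and $r_j$, hence an ancestor of (or equal to) their lowest common ancestor $\ell := \mathrm{LCA}(r_i,r_j)$. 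In particular, $d(x,r_i) = d(x,\ell) + d(\ell,r_i) \ge d(\ell,r_i)$.

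The second step invokes the non-merge hypothesis. Since $T_{r_i}$ and $T_{r_j}$ were not merged by \merge{}, the LCA condition used for merging fails; reading off the threshold $1.81b$ used to burn a merged pair, the merge requires $\max(d(\ell,r_i),d(\ell,r_j)) \le 0.81b$, so the non-merge condition gives $\max(d(\ell,r_i),d(\ell,r_j)) > 0.81b$. Assume without loss of generality $d(\ell,r_i) > 0.81b$. By the construction of $P'_i$ (the middle $0.62b$ vertices of a $b$-length root-to-leaf path in $T_{r_i}$, obtained by removing the top $0.19b$ vertices), every $p \in P'_i$ satisfies $d(r_i,p) \ge 0.19b$. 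Combining these,
\[
d(x,p) \;=\; d(x,r_i) + d(r_i,p) \;\ge\; d(\ell,r_i) + 0.19b \;>\; 0.81b + 0.19b \;=\; b \;>\; b-1,
\]
contradicting $p \in V_x$. Hence $V_x$ meets at most one $P'_i$.

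The final step is the counting: $|V_x| \le |P'_i| \le 0.62b \le b$, which is the desired bound. The only genuine subtlety I see is in step two, namely translating the informal \textit{``LCA within $0.81b$ distance''} used by \merge{} into the precise statement $\max(d(\ell,r_i),d(\ell,r_j)) \le 0.81b$; this is forced by the requirement that the merged burn at $\ell$ with range $1.81b$ must cover the deepest leaves of both $T_{r_i}$ and $T_{r_j}$, which sit $(b-1)$ below their roots. Once this threshold is read off correctly, the $0.19b$ margin on each side of $P'_i$ is exactly tuned so that the inequality $0.81b + 0.19b > b-1$ closes the argument, and no further case analysis is needed.
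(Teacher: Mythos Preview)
Your stronger claim---that $V_x$ lies inside a single $P'_i$---is false, and the error is the sentence ``then $x$ would have to be an ancestor of both $r_i$ and $r_j$.'' That inference assumes that any common ancestor of $p\in T_{r_i}$ and $q\in T_{r_j}$ must already be a common ancestor of the two roots. This is true only when neither root is an ancestor of the other in the original tree. But the subtrees produced by repeated $b$-cutting can be \emph{stacked}: the process peels off a height-$b$ slice, then continues cutting on what remains above, so a later root $r_i$ may well sit directly above an earlier slice $T_{r_j}$ (this is exactly the configuration of \figurename~\ref{fig:case-2}). In that situation $\mathrm{LCA}(r_i,r_j)=r_i$, and any vertex $x$ on the $r_i$--$r_j$ path inside $T_{r_i}$ is an ancestor of some $p\in P'_i$ \emph{and} of every $q\in P'_j$, yet $x$ is a descendant of $r_i$, not an ancestor. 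Concretely, with $b=11$ and $x=5$ in that figure, $V_x$ contains $\{5,\dots,9\}\subseteq P'_1$ together with $\{14,15\}\subseteq P'_2$, so $V_x$ meets two middle-subpaths and your containment claim fails.

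The paper's proof splits on precisely this dichotomy. Its Case~1 is essentially your argument (no directed path between the two subpaths, so any common ancestor is above both roots and too far away). Its Case~2 handles the stacked configuration with a different, additive count: along a single directed chain the middle-subpaths are separated by the two trimmed ends ($0.19b$ each) plus the non-merge gap between consecutive roots, so the fire of range $b-1$ may touch several $P'_i$'s but the total number of $S$-vertices it reaches is still at most $b$. Your proof needs this second case; without it the argument does not close.
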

\begin{proof}

There are two cases as shown in \figurename~\ref{fig:case-1} and \figurename~\ref{fig:case-2}: first when there does not exist a directed  path from any vertex $u \in P'_i$ to $v \in P'_j$, $\forall i\ne j$. 
Secondly, when there exists a directed path either from vertex $u \in P_i$ to $v \in P_j$  or vice versa.\\

\textbf{CASE-1} If there does not exist a directed path between any two vertices $u,v$, $u \in P'_i$ and $v \in P'_j$, then  there does not exist a lowest common ancestor (LCA) of $u$ and $v$ within $b$ distance. Because if the lowest common ancestor of $u$ and $v$ is within $b$ distance, and  roots of subtrees containing $P_i$ and $P_j$ have the lowest common ancestor within $0.81b$ distance. But we know $P_i$ and $P_j$ belong to non-merged subtrees, so  roots of both the subtrees should have LCA greater than $0.81b$ distance. Therefore, a fire source with a burning range $b-1$ can burn at most $b$ vertices.\\

\textbf{CASE-2} There exists a path from $u$ to $v$ where $u \in P'_i$ and $v \in P'_j$ for $i \ne j$.
Without loss of generality, let the vertices $u$ and $v$ be closer to the roots of both subtrees.
We have $d(u,v)\ge 0.81b$ and $d(u,u') =0.62b$  where $u'$ is the deepest vertex of $P'_i$.
A fire started at any vertex in $P_i$ burns the deepest vertex $u' \in P_i$, then the top vertex $v \in P'_j$. The fire burns sequentially first the vertices of $P'_i$,  then $P'_j$, so it can burn only $b$ number of vertices.
\end{proof}


According to Lemma~\ref{lm:bvertex}, a burning range $b-1$ can burn only $b$ vertices of the set $S$. For the burning number $b$,  the sequence of the burning ranges of the sources of fire is $[b-1,b-2,b-3, \cdots,0]$. The burning number $b$ can burn at most $S_b$ vertices, where
\begin{equation} \label{eqn1}
S_b = \sum_{i= b}^{i= 1} i =  \frac{b*(b+1)}{2}
\end{equation}
The remaining task is to find the number of vertices in $S$. There are $0.81b+1$ non-merged subtrees. The number of vertices in $S$:
\begin{equation} \label{eqn2}
|S| = \sum_{i=1}^{i=0.81b+1} |P'_i| = \sum_{i=1}^{i=0.81b+1} 0.62b = (0.81b+1)*0.62b = 0.5022b^2 + 0.62b
\end{equation}

If $|S| > S_b$, then the lower bound of the burning number of the directed tree $T$ is at least $b+1$.
\begin{lemma}
If $|S|>S_b$ then $b(T)>b$.
\label{lem:ele}
\end{lemma}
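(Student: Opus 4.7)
The plan is to argue by contradiction, assuming a burning sequence $v_0, v_1, \ldots, v_{b-1}$ of length $b$ exists that burns every vertex of $T$, and in particular every vertex of $S$. Under this convention, source $v_i$ has burning range $b-1-i$, so across the whole sequence the ranges are exactly $b-1, b-2, \ldots, 1, 0$. My aim is to show that the number of vertices of $S$ that such a sequence can cover is at most $S_b$, contradicting the hypothesis $|S| > S_b$.

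The key preparatory step is to strengthen Lemma~\ref{lm:bvertex} from the single range $b-1$ to every range $r \le b-1$, in the form: \emph{a source of fire with burning range $r$ burns at most $r+1$ vertices of $S$}. The argument re-uses the two-case structure of Lemma~\ref{lm:bvertex}. In CASE-1 the fact that the $0.81b+1$ subtrees are non-merged forces the LCA of any two vertices lying in distinct $P'_i, P'_j$ to be at distance greater than $b$, so a source of range $r \le b-1$ reaches vertices of at most one $P'_i$, where the subpath structure immediately yields the bound $r+1$. In CASE-2, the fire can only propagate along a single directed path that crosses from one $P'_i$ into another, so it still advances linearly one vertex per step, again burning at most $r+1$ vertices of $S$.

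Summing the per-source bound over the entire burning sequence gives
\begin{equation*}
\sum_{i=0}^{b-1} \bigl((b-1-i)+1\bigr) \;=\; \sum_{j=1}^{b} j \;=\; \frac{b(b+1)}{2} \;=\; S_b.
\end{equation*}
If $|S| > S_b$, then at least one vertex of $S$ remains unburned after $b$ rounds, contradicting the assumption that the burning sequence of length $b$ covers $T$. Therefore $b(T) > b$, as claimed.

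The main obstacle is the extension of Lemma~\ref{lm:bvertex} to arbitrary ranges $r \le b-1$; one must verify that shrinking the range from $b-1$ to $r$ does not create new ways for a single source to touch two distinct $P'_i$'s (handled by the LCA-distance bound from the non-merging property) and that, once confined to one $P'_i$ or to a single CASE-2 path, the linear-path geometry delivers the clean count of $r+1$. Once this strengthened lemma is in place, the summation is routine and the contradiction follows directly from the hypothesis $|S| > S_b$.
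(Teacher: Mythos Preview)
Your proposal is correct and follows essentially the same route as the paper: the paper also sums the per-source bounds over ranges $b-1,\ldots,0$ to get $S_b=\sum_{i=1}^{b} i$ and concludes by contradiction, implicitly relying on the same extension of Lemma~\ref{lm:bvertex} to arbitrary ranges $r\le b-1$ that you spell out. If anything, you are more explicit than the paper in flagging that this extension is what makes the summation legitimate.
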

\begin{proof}
A burning range $b-1$ can burn at most  $b$ vertices of the set $S$. By equation ~\ref{eqn1}, burning number $b$ can burn less than the total number of vertices in $S$ as shown in equation~\ref{eqn2}. Therefore, $b(T)>b$.
\end{proof}

\begin{theorem}
A polynomial time $1.905$-approximation algorithm exists for the single-rooted directed tree.
\label{thm:directedTree}

\end{theorem}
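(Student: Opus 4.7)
The plan is to turn Algorithm~\ref{algo:lb4} into an approximation by running it for successively larger guesses $b = 1, 2, \ldots$ until it returns a burning sequence rather than $-1$, and outputting that sequence. The theorem then reduces to two assertions: if Algorithm~\ref{algo:lb4} succeeds on guess $b$, the returned sequence has length at most $\lceil 1.905\,b \rceil$; and if it returns $-1$ on guess $b$, the true burning number satisfies $b(T) \ge b+1$. Together these imply that the smallest successful guess $b^{*}$ obeys $b^{*} \le b(T)$ while the produced sequence has length at most $1.905\,b^{*} \le 1.905\,b(T)$, giving the stated ratio. Polynomiality follows because each call performs $b$-cutting, LCA queries, and radius bookkeeping in polynomial time, and at most $n$ values of $b$ need be tried (a binary search suffices).

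The first assertion is a direct audit of the radius pool $B = [0, 1, \ldots, \lceil 1.905\,b \rceil]$ that Algorithm~\ref{algo:lb2} drains. Every vertex committed to $BS_2$ consumes a value $\ge 1.81\,b$, which suffices to burn both of the merged subtrees since each has height at most $b$ and their roots share an LCA within distance $0.81\,b$; every vertex committed to $BS_1$ consumes a value $\ge b$, which suffices to burn its own $b$-cut subtree. Placing the $BS_2$ fires first (in the positions whose associated burning radii are $\ge 1.81\,b$) and the $BS_1$ fires next (in positions with radii between $b$ and $1.81\,b - 1$), then padding the remaining $\lceil 1.905\,b \rceil - |BS_1| - |BS_2|$ positions with arbitrary vertices, produces a valid burning sequence of length $\lceil 1.905\,b \rceil$.

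The second assertion I would split into two sub-cases. If GET-CENTERS-FOR-SINGLE-ROOTED-TREE already returns $|BS| > b$, Lemma~\ref{lem:first1} immediately gives $b(T) > b$. Otherwise $|BS| \le b$ and MERGE-AND-BURN exhausts its pool; a short accounting shows that it performed fewer than $\lceil 0.095\,b \rceil$ merges while leaving more than $\lceil 0.905\,b \rceil$ unmerged roots, and hence at least $0.81\,b + 1$ subtree roots whose pairwise LCAs all exceed $0.81\,b$, since otherwise a further merge would have been possible. From these roots I would extract, as in the paragraph preceding Lemma~\ref{lm:bvertex}, the middle $0.62\,b$ portion $P'_i$ of the $b$-length path descending from each root and collect them into $S$; Lemma~\ref{lm:bvertex} bounds by $b$ the number of vertices of $S$ reachable from any single fire source of radius $b - 1$, so a burning sequence of length $b$ reaches at most $S_b = b(b+1)/2$ vertices of $S$, whereas $|S| = (0.81\,b + 1)(0.62\,b) > S_b$. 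Lemma~\ref{lem:ele} then concludes $b(T) > b$.

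The step I expect to be the main obstacle is isolating the $0.81\,b + 1$ pairwise non-mergeable roots from the merely greedy procedure of Algorithm~\ref{algo:lb2}. It is not automatic that a small greedy matching on the ``LCA within $0.81\,b$'' graph implies the same bound on any matching, so I would need a short combinatorial argument that the roots left un-paired by the greedy run form an independent set in that graph (or that a sufficiently large subset of them does); once that combinatorial witness is produced, the invocations of Lemmas~\ref{lm:bvertex} and~\ref{lem:ele} close the proof mechanically.
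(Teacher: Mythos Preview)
Your proposal is correct and follows essentially the same route as the paper's own (very terse) proof, which simply invokes Lemmas~\ref{lem:first1} and~\ref{lem:ele} on failure and asserts the returned sequence has length $1.905b$ on success. The obstacle you flag is real but resolves by the standard maximal-matching observation: your accounting shows that when \merge{} returns $-1$ with $|BS|\le b$ fewer than $0.095b$ merges have occurred, so the supply of radii $\ge 1.81b$ was never exhausted, hence every root that was processed as a singleton genuinely had no mergeable partner among the remaining centers at that moment, and since earlier singletons are still present when later ones are examined, the singleton set is forced to be independent in the ``LCA within $0.81b$'' graph.
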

\begin{proof}
Algorithm uses burning ranges $\{1.905b-1, 1.905b-2 \cdots 0\}$ in the whole burning process. If the algorithm fails to burn the entire graph, Lemma~\ref{lem:ele} and Lemma~\ref{lem:first1} claim that we can not burn the whole graph in less than or equal to $b$ rounds. Therefore, on success, the algorithm returns the burning sequence of  length $1.905b$.
\end{proof}
\begin{table}[ht]
\caption{Comparison  of estimated burning numbers $b(G)$ of different approximation algorithms for randomly generated Cactus Graphs.}
\centering
\begin{tabular}{|>{\centering\arraybackslash} m{1.0cm} | >{\centering\arraybackslash}m{1.4cm} | >{\centering\arraybackslash} m{1.4cm} |>{\centering\arraybackslash}m{1.4cm} | >{\centering\arraybackslash}m{1.4cm}| >{\centering\arraybackslash} m{1.4cm}| >{\centering\arraybackslash} m{1.4cm}|}
\hline
Name & $$|V|$$ & $|E|$ & $3$-Approx~\cite{Bonato2019approx} & $2.75$-Aprrox \\ \hline
$G_{1}$ & 303 & 327 & 25 & 19 \\ \hline
$G_{2}$ & 1152 & 1223 & 37 & 35 \\ \hline
$G_{3}$ & 2186 & 2303 & 46 & 46 \\ \hline
$G_{4}$ & 3270 & 3435 & 58 & 57 \\ \hline
$G_{5}$ & 4471 & 4690 & 67 & 66 \\ \hline
$G_{6}$ & 6743 & 7012 & 73 & 71 \\ \hline
$G_{7}$ & 7824 & 8140 & 76 & 79 \\ \hline
$G_{8}$ & 9766 & 10133 & 82 & 82 \\ \hline
$G_{9}$ & 11250 & 11669 & 85 & 88 \\ \hline
$G_{10}$ & 13197 & 13658 & 94 & 90 \\ \hline
$G_{11}$ & 15327 & 15839 & 103 & 101 \\ \hline
$G_{12}$ & 17244 & 17812 & 109 & 107 \\ \hline
$G_{13}$ & 19975 & 20584 & 115 & 112 \\ \hline
$G_{14}$ & 22391 & 23059 & 130 & 115 \\ \hline
$G_{15}$ & 24011 & 24719 & 124 & 118 \\ \hline
$G_{16}$ & 26207 & 26965 & 127 & 126 \\ \hline
$G_{17}$ & 28492 & 29303 & 130 & 126 \\ \hline
$G_{18}$ & 34172 & 35083 & 142 & 137 \\ \hline
$G_{19}$ & 37759 & 38721 & 151 & 145 \\ \hline
$G_{20}$ & 39502 & 40516 & 63 & 159 \\ \hline
$G_{21}$ & 42323 & 43388 & 160 & 148 \\ \hline
$G_{22}$ & 45209 & 46315 & 166 & 159 \\ \hline
$G_{23}$ & 46857 & 48012 & 172 & 162 \\ \hline
$G_{24}$ & 48736 & 49930 & 172 & 167 \\ \hline
\end{tabular}
\label{table:cactus}
\end{table}

\begin{table}[H]

\caption{Comparison  of estimated burning numbers ($b(T)$) of different approximation algorithms for randomly Directed Tree.}
\begin{adjustbox}{angle=90}
\begin{tabular}{|>{\centering\arraybackslash} m{1.0cm} | >{\centering\arraybackslash}m{1.4cm} | >{\centering\arraybackslash} m{1.4cm} |>{\centering\arraybackslash}m{1.4cm} | >{\centering\arraybackslash}m{1.4cm}| >{\centering\arraybackslash} m{1.4cm}| >{\centering\arraybackslash} m{1.4cm}|}
\hline
Name & $|V|$ & Our $3$-Approx \\ \hline
$T_{1}$ & 3000 & 50 \\ \hline
$T_{2}$ & 5000 & 147 \\ \hline
$T_{3}$ & 7000 & 150 \\ \hline
$T_{4}$ & 9000 & 148 \\ \hline
$T_{5}$ & 13000 & 156 \\ \hline
$T_{6}$ & 15000 & 160 \\ \hline
$T_{7}$ & 17000 & 161 \\ \hline
$T_{8}$ & 19000 & 161 \\ \hline
$T_{9}$ & 23000 & 165 \\ \hline
$T_{10}$ & 25000 & 168 \\ \hline
$T_{11}$ & 27000 & 170 \\ \hline
$T_{12}$ & 29000 & 176 \\ \hline
$T_{13}$ & 33000 & 176 \\ \hline
$T_{14}$ & 35000 & 177 \\ \hline
$T_{15}$ & 37000 & 185 \\ \hline
$T_{16}$ & 39000 & 185 \\ \hline
$T_{17}$ & 41000 & 184 \\ \hline
$T_{18}$ & 43000 & 188 \\ \hline
$T_{19}$ & 45000 & 190 \\ \hline
$T_{20}$ & 47000 & 193 \\ \hline
$T_{21}$ & 49000 & 188 \\ \hline
\end{tabular}
\label{table:multiroottree}
\end{adjustbox}
\end{table}

\begin{table}[H]
\caption{Comparison  of estimated burning numbers ($b(T)$) of different approximation algorithms for randomly generated directed trees.}
\begin{threeparttable}
\begin{tabular}{|>{\centering\arraybackslash} m{3.0cm} | >{\centering\arraybackslash}m{1.4cm} | >{\centering\arraybackslash} m{1.4cm} |>{\centering\arraybackslash}m{1.4cm} | >{\centering\arraybackslash}m{1.4cm}| >{\centering\arraybackslash} m{1.4cm}| >{\centering\arraybackslash} m{1.4cm}|}
\hline
\multirow{2}{*}{Random trees} &
{|V|} &
{Our $2$-Approx} &
{Our $1.905$-Approx} \\
\hline
{$T_1$}& 1000&  32 &  27  \\
\hline
{$T_2$}& 2000&  38 &  35  \\
\hline
{$T_3$}& 3000&  44 &  42   \\
\hline
{$T_4$}& 4000& 50 & 48    \\
\hline
{$T_5$}& 5000&  50 &  48   \\
\hline
{$T_6$}& 6000&  62 & 60    \\
\hline
{$T_7$}& 7000&  58 & 56   \\
\hline
{$T_8$}& 8000& 66 &  63   \\
\hline
{$T_9$}& 9000&  68 &  65    \\
\hline
{$T_{10}$}& 10000&  72 & 69   \\
\hline
{$T_{11}$}& 11000&  74 & 71   \\
\hline
{$T_{12}$}& 12000& 74 & 73    \\
\hline
{$T_{13}$}& 13000&  76 & 73    \\
\hline
{$T_{14}$}& 14000& 78 &  75   \\
\hline
{$T_{15}$}& 15000&  80 &  77    \\
\hline
{$T_{16}$}& 16000&  82 &  79   \\
\hline
{$T_{17}$}& 17000&  84 &  81   \\
\hline
{$T_{18}$}& 18000& 90  &  86   \\
\hline		
{$T_{19}$}& 19000&  90 &  86  \\
\hline
{$T_{20}$}& 20000& 88 & 86 \\
\hline
\end{tabular}
\end{threeparttable}
\label{table:directT}
\end{table}

\subsection{Experiments and Results}
We have implemented the proposed approximation algorithms for 1) cactus graphs, 2) poly-trees, and 3) directed trees.
In the first experiment, we generate 24 cactus graphs randomly with orders ranging from 300 to 49,000. The proposed $2.75$-approximation algorithm is compared with Bonato et al.~\cite{Bonato2019approx}, and the results are shown in Table~\ref{table:cactus}. The $2.75$-approximation algorithm gives better burning numbers than the $3$-approximation algorithm.

As there are no known approximation algorithms for graph burning on directed graphs, we give the results obtained by our $3$-approximation algorithm. We generate $21$ poly-trees randomly, having vertices ranging from 3000 to 49,000. The results are shown in Table~\ref{table:multiroottree}.

Finally, in the $3rd$ experiment, the results obtained for single directed trees are shown in Table~\ref{table:directT}. We generated $20$  directed trees with the number of vertices ranging from 1000 to 20000. The implementations of our $2$-approximation algorithm (Theorem~\ref{thm:multiRootedTree}), as well as the $1.905$-approximation algorithm for directed trees (Theorem~\ref{thm:directedTree}), are compared. And the results are shown in Table~\ref{table:directT}.

It can be seen that, in all the cases, the $1.905$-approximation algorithm yields a lower burning number than the $2$-approximation algorithm.

\section{Conclusions and Future Work}
\label{sec:concl}
In this paper, we have proposed an approximation algorithm for \gb{} on  cactus graphs, poly trees, and directed trees. The approximation algorithm for  cactus graphs improves the approximation factor from $3$ to $2.75$. Further, our approximation algorithms for directed trees initiate the study of approximation algorithms for graph burning on directed graphs. The proposed approximation algorithms on directed trees have approximation factors of $3$ for the poly tree and $1.905$ for the directed tree. We have done experimentation on randomly generated 
cactus graphs and directed trees of sizes up to $49K$ to demonstrate the experimental results.  
The concepts of extracting equal height sub-trees and the merge technique may be explored for the other related problems in information diffusion. In the future, we would like to explore the approximation algorithms for the generalized \gb{} problem, namely the $k$-burning problem.

\bibliographystyle{plain}
\bibliography{myrefs}
\end{document}